\newtheorem{proposition}{proposition}
\DeclareMathOperator*{\argmin}{arg\,min}
\DeclareMathOperator*{\argmaxA}{arg\,max}
\title{IIVA: A Simulation Based Generalized Framework for \underline{\textbf{I}}nterdependent \underline{\textbf{I}}nfrastructure \underline{\textbf{V}}ulnerability \underline{\textbf{A}}ssessment}
\author{Prasangsha Ganguly \& Sayanti Mukherjee*}
\affil[]{Department of Industrial and Systems Engineering, University at Buffalo}
\affil[]{*sayantim@buffalo.edu}
\date{}
\begin{document}

\maketitle

\begin{abstract}
Accurate vulnerability assessment of critical infrastructure systems is cardinal to enhance infrastructure resilience. Unlike traditional approaches, this paper proposes a novel infrastructure vulnerability assessment framework that accounts for: various types of infrastructure interdependencies including physical, logical and geographical from a holistic perspective; lack of/incomplete information on supply-demand flow characteristics of interdependent infrastructure; and, unavailability/inadequate data on infrastructure network topology and/or interdependencies. Specifically, this paper models multi-infrastructure vulnerabilities leveraging simulation-based hybrid approach coupled with time-dependent Bayesian network analysis while considering cascading failures within and across CIS networks, under incomplete information. Existing synthetic data on electricity, water and supply chain networks are used to implement/validate the framework. Infrastructure vulnerabilities are depicted on a geo-map using Voronoi polygons. Our results indicate that infrastructure vulnerability is inversely proportional to the number of redundancies inbuilt in the infrastructure system, indicating that allocating resources to add redundancies in an existing infrastructure system is essential to reduce its risk of failure. It is observed that higher the initial failure rate of the components, higher is the vulnerability of the infrastructure, highlighting the importance of modernizing and upgrading the infrastructure system aiming to reduce the initial failure probabilities. Our results also underline the importance of collaborative working and sharing the necessary information among multiple infrastructure systems, aiming towards minimizing the overall failure risk of interdependent infrastructure systems. The proposed scenario-based simulation illustrates regional variations in various vulnerability estimates under changing parameters/scenarios, offering infrastructure managers various customizable decision-parameters for efficient resource allocations. 
\end{abstract}



\section{INTRODUCTION}
The modern society is critically dependent on lifeline infrastructure systems such as electric power, telecommunication, water, transportation, and others. Most of these infrastructure systems comprise of multiple components interconnected by  logical or physical dependencies, rendering them suitable to be modeled as a network of networks where multiple infrastructure networks are connected by interdependency links. In a network-of-network system, failure of a key component may lead to a cascading failure of the entire system, which is not desirable~\cite{Faturechi, Dong}. 
In literature, modeling the reliability of an infrastructure is well studied~\cite{RE_book1, RE_book2}. One of the most popular methods used to model system reliability is the Fault Tree (FT) method~\cite{Dugan_dft}. In a FT Analysis, a top event (failure of the whole system) is represented as a Boolean combination (AND, OR, NOT) of the basic events (e.g., failure of switch, substation(s), transformer(s)). In this method, at first the failure probabilities are assigned to the basic events and then the failure probability of the top event is calculated through Boolean modeling and calculations of probabilities~\cite{FT}. The main \textit{advantage} of such deterministic models is that, they are easily interpretable and encourages a methodological way of thinking. However, the main \textit{disadvantage} is that, being static models, they can't capture the dynamics of failure propagation with time. Several research studies have extended the static fault trees into dynamic fault trees (DFT) that uses different advanced gates (e.g., the spare gates, priority gates etc.) to capture the temporal dynamics of the failure propagation~\cite{Dugan_dft}. Time dependent Bayesian networks such as the discrete time Bayesian networks, the continuous time Bayesian networks, and the Dynamic Bayesian networks (DBN) have been used to model the DFT in literature~\cite{Raiteri, Boudali2}. However, the methods for developing a FT or DFT are not generalized enough to be applied to different infrastructure systems. Even for a single infrastructure system like electricity infrastructure, different FTs are proposed based on the components or events considered such as the failure of generators, switches, substations, etc.~\cite{VOLKANOVSKI} or disruption of the distribution generators, output flow path, etc.~\cite{Song}. Nonetheless, a generalized approach to construct a DFT would be particularly useful for the infrastructure managers. Even if such DFTs may not be exact, a procedural approach to construct a heuristic DFT that can be applied to multiple infrastructure systems will be helpful for the reliability assessment of any type of infrastructure. 

Since the infrastructure systems are interdependent, analyzing a single infrastructure system is not ideal as it underestimates the risk of failure. Hence, recently several studies have focused on modeling the vulnerability and reliability of multiple interdependent infrastructure systems in conjunction~\cite{Sharma, Lu, Galbusera}. Rinaldi \textit{et al.} classified infrastructure interdependencies into four categories:
\begin{inparaenum}[(i)]
\item \textit{physical interdependency}, which emerges from physical linkages or connections among elements of the 
infrastructures;
\item \textit{cyber interdependency}, when the state of one infrastructure depends on information/data transmitted from/through another infrastructure; 
\item \textit{geographical interdependency}, if there exists a close spatial proximity between elements of different infrastructures; and
\item \textit{logical interdependency}, when the state of one infrastructure depends on the state of others via 
mechanisms that are not physical, cyber, or geographic connections
\end{inparaenum}~\cite{Rinaldi}. Different other types of interdependencies such as functional interdependency have also been studied in the literature~\cite{Zhang2, Wallace}. Ouyang \textit{et al.} provided a thorough survey of the literature focusing on the interdependent infrastructure systems modeling~\cite{Ouyang_review}. Previous studies also established that for accurate estimation of overall system performance, the connections between different networks need to be considered by analyzing the overall system and individual network performance~\cite{Krishna, Hernandez, Yagan}. Considering the methodology of modeling the critical infrastructure systems, the existing literature can be broadly classified into six categories: 
\begin{inparaenum}[(i)]
\item \textit{aggregate supply-demand methodology}, which considers the supply-demand relationships by evaluating the total required demand for infrastructure services in a region, and the ability of satisfying that demand by the infrastructure system~\cite{Apostolakis, Adachi, Lee};
\item \textit{dynamic simulations}, which employ simulation techniques like discrete event simulation or system dynamics techniques~\cite{Conrad, zio, Osorio_simulation}; 
\item \textit{agent-based models}, which considers the physical components of an infrastructure to be modeled as agents and allow them to interact for the analysis of the operational characteristics and physical states of infrastructures~\cite{Casalicchio};
\item \textit{physics-based models}, which considers the physical aspects of an infrastructure using standard engineering techniques~\cite{Chen};
\item \textit{population mobility models}, which captures the movement of entities through geographical regions considering a very high resolution of modeling approach~\cite{Casa2};
\item \textit{Leontief input-output models}, which considers the economic flows among infrastructure sectors to provide a linear, aggregated, time-independent analysis of the generation, flow, and consumption of various commodities in the various sectors~\cite{Haimes1} \end{inparaenum}.

A considerable fraction of literature that aims to model the interconnected infrastructure systems use the supply-demand modeling approaches by explicitly identifying the interconnections or interdependencies between different infrastructure systems~\cite{Cavallaro, Cavdaroglu}. For example, Lee \textit{et al.} proposed a detailed mathematical programming formulation for network flow based model that explicitly incorporates the interdependencies among a set of civil infrastructure systems. However, this model requires a large set of parameters to be known in advance for the model to run~\cite{Lee}. Mixed integer linear programs (MILP)~\cite{Ouyang_cacie, Chen_net}, multi-stage optimization models~\cite{Fng} and other optimization techniques~\cite{Alinizzi, Gonzalez2} have been widely used in literature for modeling a resilient system of interdependent infrastructure systems. In these models, all the connector variables between different infrastructure systems are required to be known in advance. Several other studies focused on mathematical programming in solving interdependent network design problem (INDP)~\cite{Gonzalez, Lind, NURRE}, network reconstruction, restoration and reliability modeling~\cite{Veremyev}. Such methods suffer from similar limitations where the entire functional dependency links between two infrastructure systems need to be known beforehand. Furthermore, all the models require medium or large amount of data; especially, the flow based models require large amounts of data for realistic modeling of the interdependent infrastructure systems~\cite{Ouyang_simulation}. However, one of the main challenges in the research of interdependent infrastructure systems is the unavailability of data~\cite{Rinaldi}. The dynamic simulation based approaches solve this problem by abstracting the physical details of the services provided by the infrastructures in almost all aspects---making the model simple and feasible to use~\cite{Osorio_simulation, Ouyang_simulation}. However, often such methods are inherently incapable of capturing most of the dimensions of physical, logical and geographical interdependencies between the infrastructure systems~\cite{zio}. Hence a hybrid approach is required that can model most of the interdependence dimensions of the infrastructure systems, while considering that a minimal number of parameters is known beforehand. 

Therefore, although the interdependent infrastructure reliability and vulnerability assessment is well studied in literature, there exists notable research gaps. 
Hence, in this paper, a generalized framework \textit{to model the vulnerability of an infrastructure system arising from both the cascading failure within the network and the failure of its components induced by other interdependent infrastructure component failures} is proposed. Specifically, considering the vulnerability estimation problem from the perspective of the manager of a child infrastructure that is dependent on other parent infrastructures, we aim to address the following research gaps, identified through our comprehensive literature review, in a systematic manner as listed below. Note, in this paper, when an infrastructure provides service to another infrastructure, the \textit{service-providing infrastructure} is termed as a \textit{parent infrastructure}, and the \textit{service-receiving infrastructure} is termed as a \textit{child infrastructure}. 
\begin{itemize}
    \item \begin{inparaenum}[*]
    \item \textit{Gap: } There is a lack of a holistic framework to estimate the vulnerability of interdependent infrastructure systems. The literature either focus on a single infrastructure system vulnerability or only the vulnerability induced by the interdependent infrastructure systems. Not being able to model the infrastructure vulnerability from a holistic perspective often underestimates the compound risk of failure induced by multiple interdependent infrastructure systems failures.
    \item \textit{Contribution: } To address this gap, a simulation environment to calculate the time dependent vulnerability of infrastructure systems measured by the probability of failure of services, while considering both the within infrastructure cascading failure and the failure induced by other interdependent infrastructures' failures is proposed. This estimate of vulnerability which encapsulates both the within infrastructure cascade and failure induced by other infrastructure systems into a single metric, is referred to as \textit{comprehensive vulnerability} for the subsequent discussions in the paper.
    \end{inparaenum}
    
    \item \begin{inparaenum}[*]
    \item \textit{Gap: } To estimate the failure propagation within a network, DFTs are well studied. However, depending on the infrastructure under consideration, the component and connectivity of the DFTs can significantly vary. For several infrastructure systems, DFTs do not even exist in literature. Furthermore, complex DFTs may suffer from state space explosion problem. Lack of a generalized approach for constructing DFTs for all types of infrastructure systems may hinder efficient assessment of reliability and vulnerability of the infrastructure systems.
    \item \textit{Contribution: } Therefore, to address this gap, a heuristic algorithm to construct a DFT of an infrastructure, which is generalized enough to be applied to multiple infrastructure systems is presented. Although the DFTs developed using the proposed heuristic approach is not exact, in absence of data this method can provide a close approximation of the exact DFT, and thus can aid in efficient vulnerability and reliability assessment of the infrastructure system under incomplete information.  
    \end{inparaenum}
    
    \item \begin{inparaenum}[*]
    \item \textit{Gap: } In contrary to the real life scenario, often the most realistic models assume that all the supply demand characteristics are well-known to all the infrastructure systems' managers in advance. In doing so, such models relax the intrinsic dimensions of the interdependencies while trying to abstract the supply demand characteristics. Not considering all the dimensions of interdependencies may underestimate the actual risk of failure of the infrastructure systems.
    \item \textit{Contribution: } To address this problem, a generalized framework that can capture most of the dimensions of the interdependencies, while requiring minimal data is proposed. Our proposed approach only assumes that the topology of the infrastructures and the initial failure rates of the components are available to the infrastructure managers, while the supply demand characteristics are unknown and simulated via multiple scenarios. 
    \end{inparaenum}
    
\end{itemize}

 To implement the proposed generalized vulnerability assessment framework, we leverage Bayesian network (BN) to model the dynamic fault tree (DFT) for each infrastructure system, which can efficiently capture the failure propagation dynamics within a particular infrastructure referred to as the \textit{intra-infrastructure failure}. Then, over a grid based geo-map, a network of interdependent infrastructure networks is created which can quantitatively measure the \textit{physical, logical and geographical} interdependencies that may exist between every pair of infrastructure systems. Thereafter, using a simulation based approach, the failure dynamics of the components of the child infrastructure that may be induced by the failure of interdependent infrastructure systems, referred to as the \textit{inter-infrastructure failure} is calculated. Finally, the compound risk of failure, i.e. the comprehensive vulnerability of a network of infrastructure systems, which may be caused by either the intra-infrastructure failure or inter-infrastructure failure is computed. The proposed framework is implemented and validated using existing synthetic data on electricity, water and supply chain networks.
The remainder of the paper is organized as follows. In Section~\ref{method}, the detailed methodology employed in this paper is described. In Section~\ref{data}, the data collection and pre-processing methods for the specific case study are described, followed by summarizing the results and key findings in the Section~\ref{result}. Finally the paper is concluded in Section~\ref{conc}.

\section{METHODOLOGY}
\label{method}
In this section, a detailed description of the methodology proposed in this paper to calculate the vulnerability of an infrastructure due to intra- and inter-infrastructure failures is presented. 
As mentioned before, the proposed framework is flexible enough to incorporate modifications in terms of the number and types of the infrastructure systems under consideration, any type of spatiotemporal characteristics, and even the lack of adequate data availability. First, to capture the failure propagation within an infrastructure, a DFT is constructed and then using dynamic Bayesian network, the time-dependent vulnerability is modeled (Section~\ref{intra-vulnerability}). To model the inter-infrastructure vulnerability where the \textit{exact supply-flow characteristics are not known}, a heuristic network of networks is created considering the physical, logical and geographical dependencies between the infrastructures (Section~\ref{inter-vulnerability}). Thereafter, leveraging a simulation approach, multiple cases are constructed to calculate the best, average or worst case inter-infrastructure failures. Finally, the intra-infrastructure and inter-infrastructure failures are combined to estimate the comprehensive vulnerability of infrastructure (Section~\ref{comp-vulnerability}).
The three parts of the overall methodology is depicted in Fig.~\ref{frame}. Details of each part is described in the following subsections.

\begin{figure}[ht]
    \centering
    \includegraphics[height = 2.1in, width=0.7\linewidth]{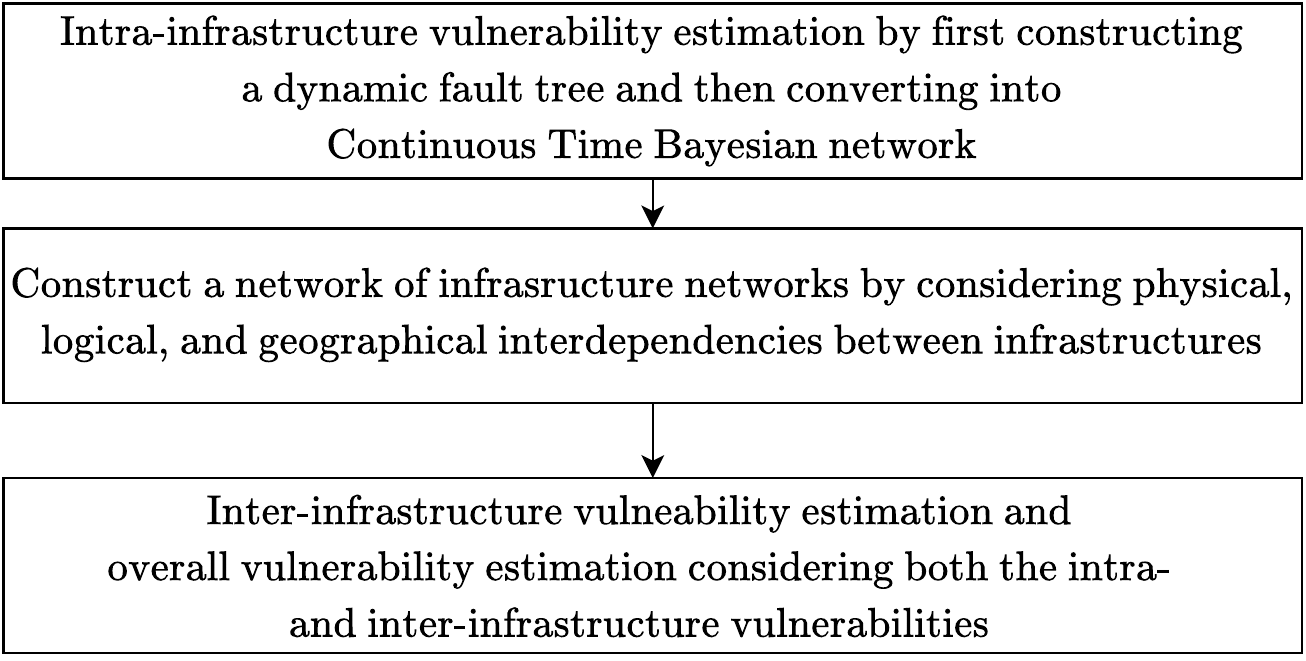}
    \caption{The overall research framework.}
    \label{frame}
\end{figure}

\subsection{Intra-infrastructure failure probability }
\label{intra-vulnerability}
In this section, the proposed method of leveraging dynamic fault tree (DFT) and continuous time Bayesian network to estimate inter-infrastructure vulnerability under incomplete information is described. A DFT consists of several gates as depicted in Fig.~\ref{dfta}. In a DFT, if two inputs $A$ and $B$ are connected via \textbf{AND} (Fig.~\ref{dfta} (i)) gate to produce output $X$, then $X$ fails if both $A$ and $B$ fail; on the other hand, if $A$ and $B$ are connected via \textbf{OR} gate (Fig.~\ref{dfta} (ii)) to produce $X$, then $X$ fails if either $A$ or $B$ fails. In a \textbf{PAND} or priority AND gate (Fig.~\ref{dfta} (iii)) , the output fails if all of its inputs fail and fail from left to right as depicted in the diagram. In a \textbf{$K/M$ VOTING }gate (Fig.~\ref{dfta} (iv) depicting a $2/3$ VOTING gate), the output fails, if at least $K$ of its $M$  inputs fail. In Fig.~\ref{dftb}, we showed how a $K/M$ VOTING gate can be modeled using a combination of basic AND and OR gates. In a $K/M$ VOTING gate with $M$ inputs, the output fails if at least $K$ of the $M$ inputs fail where $K \leq M$. It is noteworthy that the VOTING gate is a derived gate in a sense that a VOTING gate can be easily constructed using the basic AND and OR gate. If $M=2$, and $K=1$ then the VOTING gate is an OR gate and if $K=2$, then the VOTING gate is an AND gate. Essentially, a VOTING gate is an OR combination of the $\binom{M}{K}+ \binom{M}{K+1}+ ...+ \binom{M}{M}$ number of AND combinations of the inputs. Out of the total $M$ inputs, if any combination of $K$ such inputs fail, or any combination of $K+1$ to $M$ of such combinations fail, then the output fails. In a \textbf{WSP} or warm spare gate (Fig.~\ref{dfta} (v)), there is one primary input ($A$) and multiple spare inputs ($B$ and $C$). Initially, the primary input is switched on and the spares operate in a dormant or standby mode where, the failure rate of the spare is reduced by a factor $\alpha \in [0,1]$ called the dormancy factor ($\alpha$). When the primary unit fails, the first available spare becomes active. The output $X$ fails if all of the inputs fail. 

\begin{figure}[h]
    \begin{subfigure}[t]{0.35\textwidth}
     \raggedleft
        \includegraphics[height=3.25in]{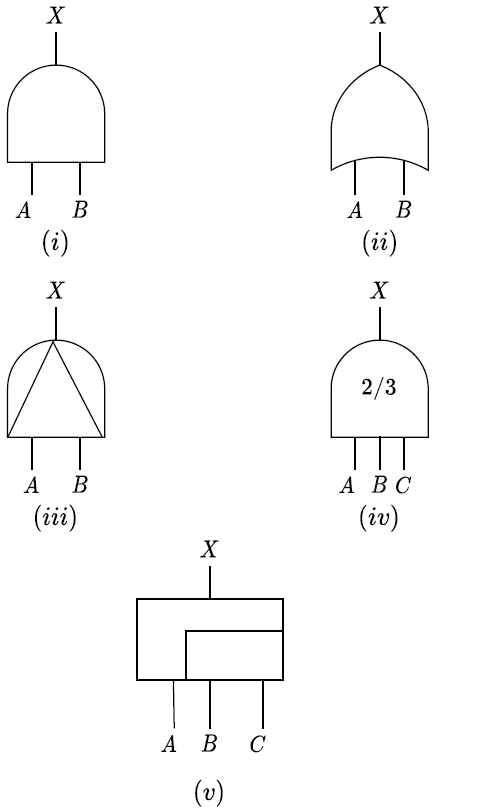}
        \captionsetup{justification=centering}
        \caption{}
        \label{dfta}
    \end{subfigure}
    \begin{subfigure}[t]{0.6\textwidth}
     \raggedright
        \includegraphics[height=3.25in]{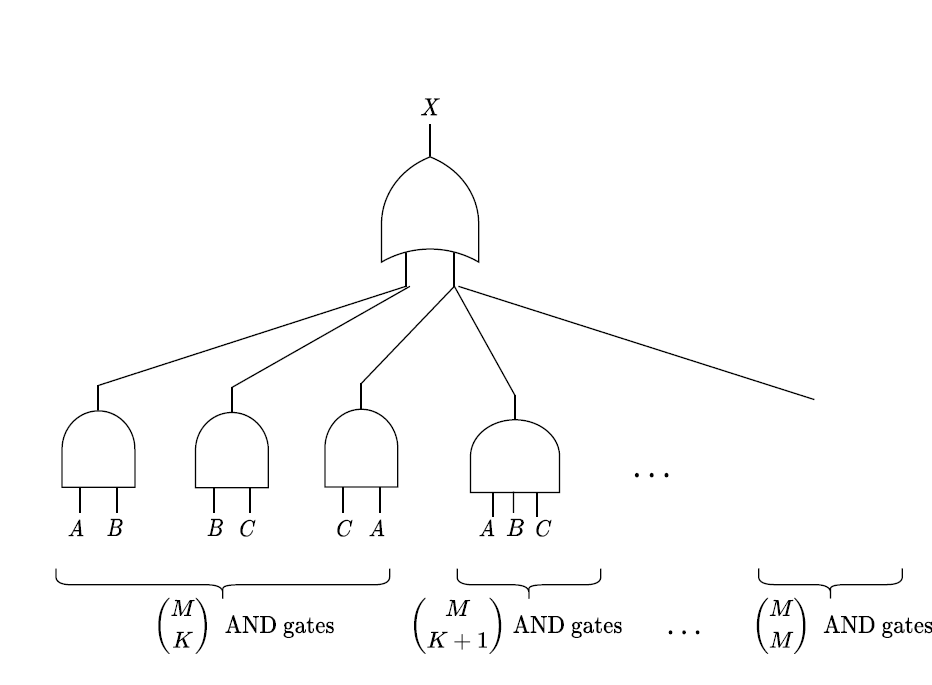}
        \captionsetup{justification=centering}
        \caption{}
        \label{dftb}
    \end{subfigure}
    \caption{(a) the gates in the DFT: (i) AND gate, (ii) OR gate, (iii) PAND gate, (iv) VOTING gate, and (v) WSP gate; (b) The depiction of VOTING gate using a combination of AND and OR gates}
\label{dft}
\end{figure}

\subsubsection{Constructing a dynamic fault tree }
\label{sdft1}
When any critical component of a particular infrastructure fails, then the failure propagates within the infrastructure network. In any networked infrastructure system, commodities or services flow from one component to another. Often, some type of commodities are generated at the source nodes and then it flows through multiple intermediate nodes before it is consumed or utilized by the customers in the terminal node. This general architecture is the backbone of the operation mode for several infrastructure systems like electricity distribution and transmission system, water distribution or the supply chain network. For example, in an electricity infrastructure, generators are connected to the loads via buses, and buses are connected via power lines in between them. The power generated at a generation station satisfies the loads connected to that particular bus and may be transferred to other buses via the transmission lines to satisfy the demands of the other buses. When a generator goes off, other generators connected to the bus try to make up for the potential demand of the bus~\cite{wood}. Thus, these generators can be considered as the sources of electricity flowing through the network, the buses can be considered as the intermediate nodes, and finally the loads connected to the buses can be considered as the consumers. Similarly, for water distribution system the water flows from the sources to the treatment plants, then to the storage reservoirs and finally to the distribution reservoirs from where the water is dispatched to the consumers~\cite{Sincero}. A similar network architecture prevails for a supply chain network, where the raw materials flow from the supplier to the manufacturer for processing the commodities and then the commodities are supplied to the retailers who supply the commodities to the customers~\cite{TANG2016}. 

\begin{figure*}[h]
    \begin{subfigure}[t]{0.49\textwidth}
     \raggedleft
        \includegraphics[height=2.5in]{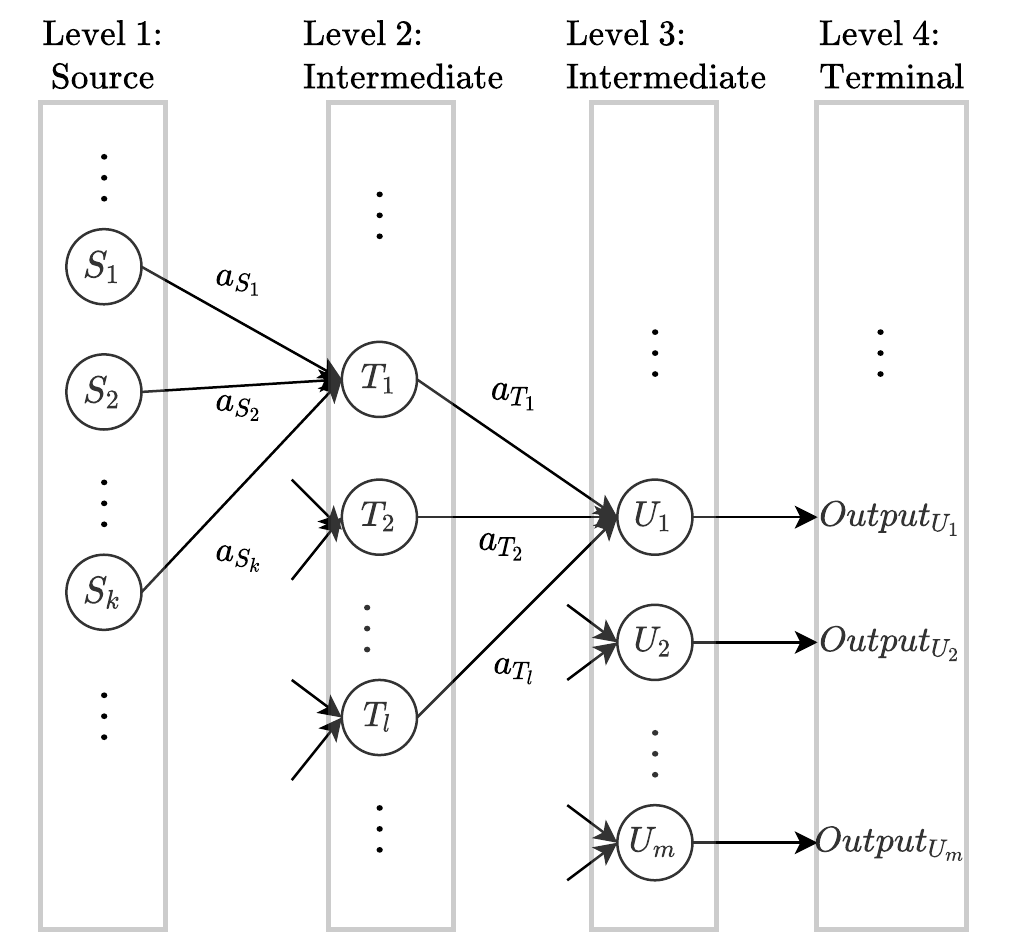}
        \captionsetup{justification=centering}
        \caption{}
        \label{a}
    \end{subfigure}
    \hspace{0.1 in}
    \begin{subfigure}[t]{0.49\textwidth}
     \raggedright
        \includegraphics[height=2.5in]{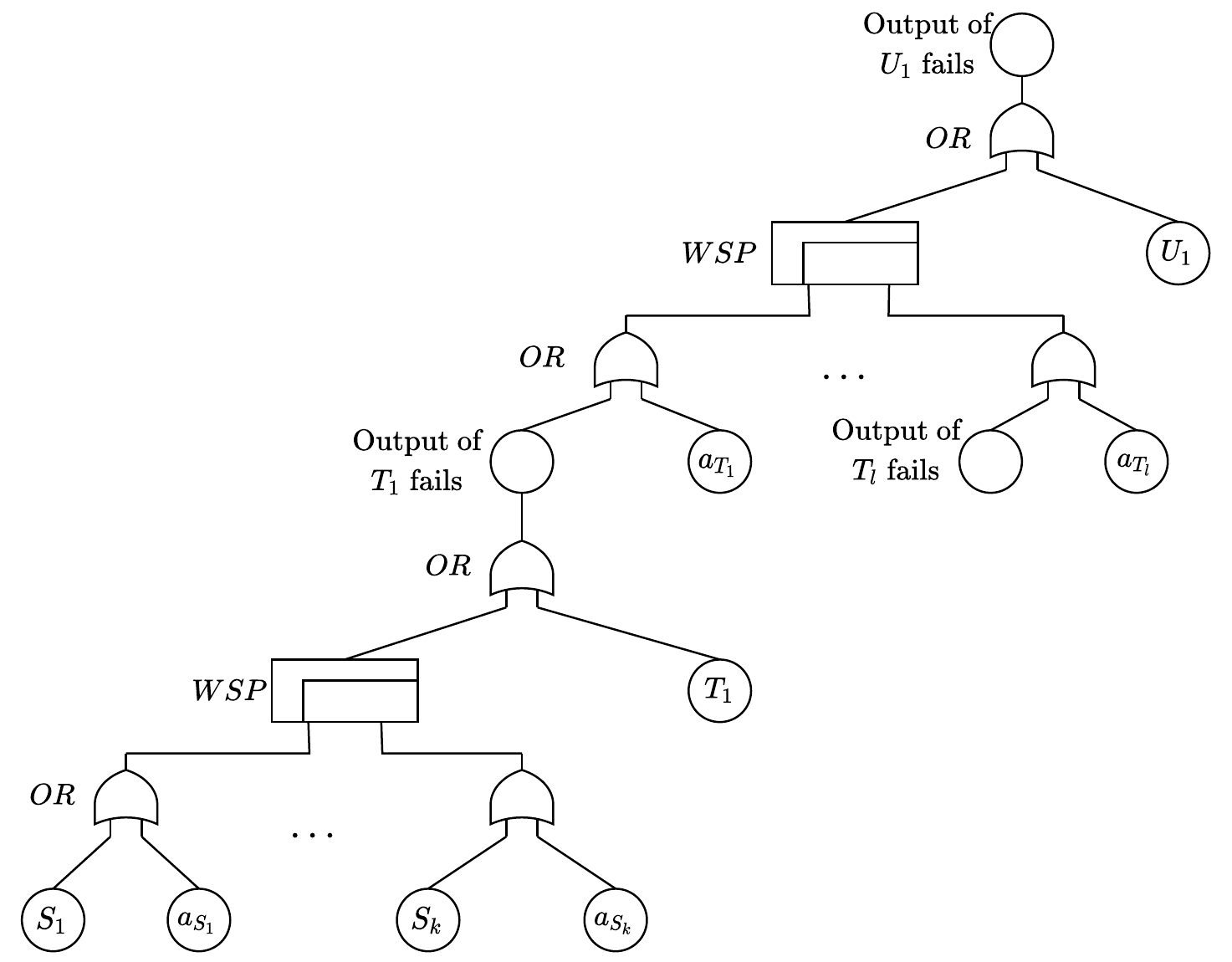}
        \captionsetup{justification=centering}
        \caption{}
        \label{b}
    \end{subfigure}
    \caption{(a) A multipartite directed acyclic graph constructed according to the material/service flow in a general infrastructure system, and (b) the dynamic fault tree (DFT) constructed from the network of the general infrastructure system under consideration.}
\label{in}
\end{figure*} 

Based on the flow of the commodities or services (e.g., electricity or water) from the source node to the terminal node via multiple intermediate nodes and connecting arcs, it is evident that \textit{different network infrastructure systems share a common layout}. In this paper, we present a \textit{generalized framework that can abstract a network for any such infrastructure when exact information on the nodes, arcs and interdependencies are not available}. First, according to the flow of the commodities or services within an infrastructure, a topological ordering~\cite{Cormen} of the nodes is created. From that topological ordering, a multipartite graph is constructed where the nodes of the graph are partitioned into multiple disjoint sets or levels say, $B_1, ..., B_l$. There exists directed edges from the nodes of level $B_b$ to $B_{b+1}$ for $b \in \{1,...,l-1\}$. For example, for a water distribution network, the first level consists of the sources, and the second level consists of the treatment plants. As water flows from the sources to the treatment plants, there exists directed edges from the nodes in the first level to the nodes in the second level. Similarly, edges exist between the nodes of the second level to the third level (e.g., reservoirs for water infrastructure) and finally to the layer of the terminal nodes (e.g., consumers). This layout can also be applied to several other infrastructure systems wherever there exists a flow of commodities or services from one set of nodes to the others like electricity infrastructure where electricity flows from generators to buses and loads. Such a common layout of a general infrastructure system has been depicted in Fig.~\ref{a}. 

Consider that the source nodes $S_1, S_2, ..., S_k$ are connected to the intermediate node $T_1$ via the arcs $a_{S_1}, a_{S_2}, ..., a_{S_k}$ respectively. If either $S_1$ or the arc coming out of $S_1$, i.e., $a_{S_1}$ fails, then the output of $S_1$ cannot reach $T_1$. Similarly, the output of $S_2$ cannot reach $T_1$ if either $S_2$ or the arc associated with it, $a_{S_2}$ fails; and if either $S_k$ or the arc associated with it, $a_{S_k}$ fails, then the output of the source $S_k$ cannot reach the destination $T_1$. When one of the sources in $S_1, S_2, ..., S_k$ or the associated arcs $a_{S_1}, a_{S_2}, ..., a_{S_k}$ fails, then the other sources try to compensate for the loss owing to the breakdown of such a source. Hence, the other sources start working in a more stressful condition than the dormant state to make up for the loss of supply. Due to this, the failure rate of the currently functional units which are in the standby mode increase by a factor ($\alpha$). This situation can be modeled leveraging a DFT, using the warm spare (WSP) gate. Using a DFT, the infrastructure network described in Fig.~\ref{a}, can be modeled using Fig.~\ref{b}. If any source of $S_1, S_2, ..., S_k$ or the associated arc with it $a_{S_1}, a_{S_2}, ..., a_{S_k}$ fails, then the output of the particular source cannot reach the intermediate node $T_1$. If all the sources connected with $T_1$ fail, then the input to $T_1$ fails. Furthermore, if one input fails, the failure rate of the other inputs increase by a certain factor. Hence, the input to $T_1$ failing is a result of a WSP gate output. Then, if the input to $T_1$ fails or the node $T_1$ itself fails, the output of $T_1$ fails. If either the output of $T_1$ fails or the arc coming out of $T_1$, i.e., $a_{T_1}$ fails, then the output of $T_1$ cannot reach the destination $U_1$. Similar to the previous layer, there are multiple nodes $T_1, T_2, ..., T_l$ connected to $U_1$ via the arcs $a_{T_1}, a_{T_2}, ..., a_{T_l}$. If any of the nodes of $T_1, T_2, ..., T_l$ or the arcs associated fails, then the output of that node may not reach $U_1$. Using a similar logic of the previous layer, the failure of the input to $U_1$ is modeled as a WSP gate output of its parent nodes or the incoming edges. Therefore, in this architecture, the output of $U_1$ fails if either the input of $U_1$ fails or the node $U_1$ itself fails. Thus, a recursive procedure where according to the commodity or service flow within an infrastructure network consisting of nodes and arcs can be converted into a dynamic fault tree (DFT) is proposed.

Though such an algorithm can be used to construct a DFT for any infrastructure network, such a DFT is not accurate. A more complex and accurate DFT can be constructed considering the accurate topology and the service-demand flow of the infrastructure, if they exist. However, the accurate DFT can be easily substituted in the overall framework of the paper as described in Fig.~\ref{frame}, without loss of generality when data is available. 

\subsubsection{Construction of a continuous-time Bayesian network}
\label{ctbn}
In order to construct a Continuous Time Bayesian Network (CTBN) from the DFT of the infrastructure~\cite{Boudali}, we extended the methodology proposed by Boudali \textit{et al.} in this paper. In the CTBN, the state space is continuous, which depict the failure time of a component of the DFT. In a CTBN, the random variable $\xi$ is in state $x$ means that the system component represented by $\xi$ failed in the time instant $x$, where $x$ is a non-negative real number. Here, the closed form solutions of the probabilities of failure of the output for each of the gates used in our DFT are presented. A detailed derivation is provided in the Appendix Section~\ref{app}. Considering a two input AND, OR and WSP gate, Table~\ref{dftgate} depicts the probabilities of failure of the output in time $[0,t]$. 
It is considered that the inputs to the gates are $A$ and $B$ where, the time of failure of $A$ follows exponential distribution with rate $\lambda_A$, and the time of failure of $B$ follows exponential distribution with rate $\lambda_B$. Then, for the WSP gate if $B$ is the spare unit with dormancy factor $\alpha$, the probability of failure of the output $X$ in time $[0,t]$ is depicted in the following Table~\ref{dftgate}. 
For the WSP gate, while deriving the closed form solution of the probability of failure, Boudali \textit{et al.} considered both the inputs $A$ and $B$ to have the same failure rate $\lambda$. This assumption is not practical in a real life scenario. In this study, the work is extended by considering the inputs $A$ and $B$ having different rates of failure: $\lambda_A$ and $\lambda_B$ respectively. The detailed derivation is provided in Appendix Section~\ref{app}. It is noteworthy that, when $\lambda_A = \lambda_B = \lambda$, our closed form solution reduces to the solution provided by Boudali \textit{et al.}~\cite{Boudali}. The dormancy factor ($\alpha$) for a two input WSP gate is considered as $0.5$. 

\begin{table}[h]
\centering
\begin{adjustbox}{max width=\linewidth,center}
 \begin{tabular}{|c | c |} 
 \hline
Gate & Probability of failure of output in $[0,t]$ \\
\hline
 AND & $1- e^{-\lambda_At} - e^{-\lambda_Bt} + e^{-(\lambda_A+\lambda_B)t}$\\
\hline
OR & $1-  e^{-(\lambda_A+\lambda_B)t}$\\
\hline
WSP & $(1-  e^{-\lambda_A t})(\frac{\alpha \lambda_B (e^{-t(\lambda_A + \alpha \lambda_B)}-1)}{-\lambda_A - \alpha \lambda_B} - \frac{\lambda_A(-(e^{-\lambda_B t}-1)(\lambda_A + \alpha \lambda_B) + \lambda_B e^{-t(\lambda_A + \alpha \lambda_B)}- \lambda_B)}{(\lambda_A + \alpha \lambda_B) (\lambda_B - \lambda_A -\alpha \lambda_B)})$\\
\hline
 \end{tabular}
 \end{adjustbox}
\caption{For each gate of our DFT, the probabilities of failure of the output in time $[0,t]$.}
\label{dftgate}
\end{table}

To summarize, the overall procedure to estimate the probabilities of intra-infrastructure failure 
has been depicted in  Algorithm~\ref{algo:intra}. 

\begin{algorithm}[!t]
\DontPrintSemicolon
\footnotesize
\textbf{Input}: An infrastructure $I_i$ with the nodes and the edges connecting the nodes. and the initial probabilities of failure for each node  and arc of $I_i$ denoted here as $\kappa$.\\
\textbf{Output}: For each node of $I_i$, the probabilities of failure arising from intra-infrastructure cascade. \\
\Begin{
/* Construction of the dynamic fault tree (DFT) for a general infrastructure */\;
\nl According to the material/service flow within $I_i$ find a topological ordering of the nodes of $I_i$\; 
\nl According to the topological ordering, partition the nodes of $I_i$ into disjoint sets (levels): $B_1, ..., B_l$, such that there exists arcs from nodes in $B_b$ to $B_{b+1}$ \; 
\nl \For{every node $n$ and arc $a$ of $I_i$}{
\nl Assign the non-zero initial probabilities of failure $\kappa(n)$ and $\kappa(a)$\;}
\nl \For{every level $lv \in \{2,..., b\}$}{
\nl \For{every node $T_l \in lv$}{
\nl Input to $T_l$ fails $\leftarrow$ $WSP$($\,OR$($\kappa(S_k), \kappa(a_{S_k})$)) , where $S_k$ are the parent nodes of $T_l$ in the level $lv-1$ and $a_{S_k}$ are the arcs between $S_k$ and $T_l$  \;
\nl Output of $T_l$ fails $\leftarrow$ $OR$(Input to $T_l$ fails, $\kappa(T_l$))\;
\nl $\kappa(T_l) \leftarrow$ Output of $T_l$ fails 
}}
/* Calculating probabilities of failure using continuous time Bayesian network */\;
\nl \For{every node in the DFT starting from the bottom level}{
\nl Using the closed form solutions in section~\ref{ctbn}, calculate the probabilities of failure of intermediate and top events}
}
\caption{{\texttt{IntraModel}$(I_i$, $\kappa$)}}
\label{algo:intra}
\vspace{-0.2cm}
\end{algorithm}

\subsection{Constructing interdependent network of networks}
\label{inter-vulnerability}
\begin{figure*}[ht]
    \centering
    \includegraphics[height=2.85in, width=0.85\textwidth]{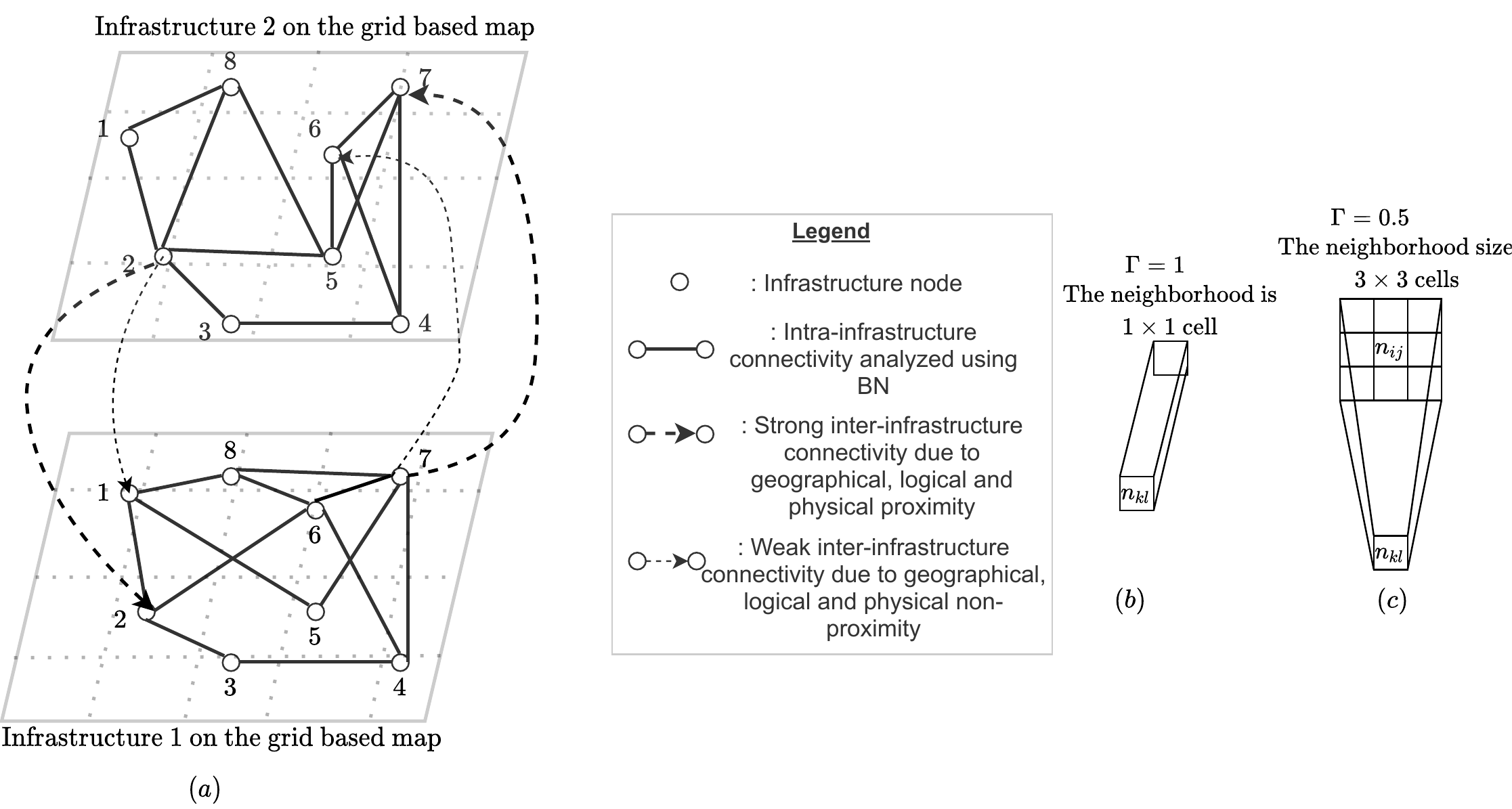}
    \caption{ The network of infrastructure networks on grid based map with intra-infrastructure and inter-infrastructure edges and the changes in neighborhood size as per $\Gamma$.}
    \label{Netofnet}
\end{figure*}

In the second step of the overall research framework where to estimate the \textit{inter-infrastructure vulnerability}, first a framework to construct a network-of-networks which can capture the physical, logical, and geographical interdependencies as depicted in the Algorithm~\ref{algo:cn} is proposed. To start with, it is considered that there are $s$ interdependent infrastructure systems, $I_1, I_2, ..., I_s$. Each infrastructure system is a network of components. Let us consider that there are $N_1$ number of nodes in $I_1$, $N_2$ number of nodes in $I_2$, ..., $N_s$ number of nodes in $I_s$. The $j^{th}$ node of infrastructure $I_i$ is denoted by $n_{ij}$. First, the algorithm checks if infrastructure $I_k$ is \textbf{logically} dependent on the infrastructure $I_i$. For example, the water distribution infrastructure may be physically and logically dependent on the electricity infrastructure system in a sense that the pumps of the water distribution infrastructure require electricity services to operate. If infrastructure $I_k$ is dependent on the infrastructure $I_i$, then there exists interdependency edges from components of $I_i$ to $I_k$. Furthermore, note that even if $I_k$ is dependent on $I_i$, all the nodes of $I_k$ may not require services from $I_i$; and all the nodes of $I_i$ may not provide service to a node of $I_k$. As an example, for electricity infrastructure to operate, water is required for cooling and also for the generators; however, all the nodes of a water distribution infrastructure cannot provide the required service to the electricity infrastructure. Water can be dispatched from the reservoirs but not directly from the treatment plants. These physical and logical dependencies are specific to infrastructure systems under consideration and expert opinion may be helpful in this regard. However, without loss of generality of the overall framework, these interdependency constraints can be relaxed or tightened according to the information available to the managers. In Algorithm~\ref{algo:cn}, it is considered if any node $n_{kl} \in I_k$ requires some type of service from a node $n_{ij} \in I_i$, then an interdependency edge can be added from $n_{ij}$ to $n_{kl}$. 

The infrastructure networks are placed on a grid based geo-map as shown in Fig. \ref{Netofnet} using QGIS~\cite{qgis}. The grid cells have a specific size, which is to be kept constant for all the infrastructures. The grid cells with the smallest resolution essentially represent the geographical coordinate of each node of the infrastructure. Using a larger resolution, it is allowed to discretize the geographical space into cells where each node of infrastructures belong to a specific cell of the grid. The cell based distance between every node ($n_{ij}$) of $I_i$ and every node ($n_{kl}$) of $I_k$ is calculated as the Euclidean distance. By add-one smoothing, the effective distance is modified such that the distance, $d \geq 1$. The interdependency edge strength between the components $n_{ij}$ and $n_{kl}$ is calculated as $1/d$. If the components $n_{ij}$ and $n_{kl}$ are physically or geographically distant, then the cell based distance ($d$) will be more, which in turn results in low interdependency edge strength. On the other hand, if the nodes $n_{ij}$ and $n_{kl}$ are in physical or geographical proximity, then the cell based distance ($d$) will be less, which results in a strong interdependent edge from $n_{ij}$ to $n_{kl}$. In this way, the \textbf{physical} and \textbf{geographical} interdependencies are taken into consideration. Finally, the strong interdependency edges are considered and the weak interdependency edges are removed from the system by using the threshold $\Gamma$, a user-defined parameter. The interdependency edges from $n_{ij}$ to $n_{kl}$ are only added if the interdependency edge strength ($\gamma_{kl}^{ij}$), calculated according to the physical and geographical proximity, is greater than or equal to the threshold $\Gamma$. However, it is noteworthy that for a large threshold $\Gamma$, there can be a situation where no interdependent edges can be added because all the interdependency edge strengths are less than $\Gamma$. Such a situation may not be practical for modeling the interdependent infrastructure systems. For example, consider the dependency of water distribution infrastructure on the electricity infrastructure system. All the water pumps require electricity to operate; however, due to high value of $\Gamma$, if no interdependency edge can be added from any bus of the electricity infrastructure to the water pumping stations, then it will not be practical to model the interdependency. Hence, in such a situation the interdependency edge of strength $\gamma_{kl}^{ij} = \frac{1}{ d}$ from the nearest node $n_{ij} \in I_i$ to the node $n_{kl}$ is added. Here, $d$ is the add-one smoothed Euclidean distance between the cell of $n_{ij}$ and $n_{kl}$, even though $\gamma_{kl}^{ij} < \Gamma$.

\begin{algorithm}[!t]
\DontPrintSemicolon
\footnotesize
\textbf{Input}: The infrastructure networks $I_1, I_2, ..., I_s$, the threshold $\Gamma \in [0,1]$ of the interdependency edge strength.\\
\textbf{Output}: A network of infrastructure networks ($G$) with the intra-infrastructure and inter-infrastructure links. \\
\Begin{
\nl $G = I_1 \cup I_2 \cup ... \cup I_s$\;
\nl        \For{every two infrastructure networks $I_i$ and $I_k$}{
\nl        \For{each node $n_{kl}$ of $I_k$}{
\nl        \For{each node $n_{ij}$ of $I_i$}{
\nl \If{the node $n_{kl}$ may require some service from the node $n_{ij}$ \textbf{or} the state of the node $n_{kl}$ may be affected by the state of the node $n_{ij}$}{
\nl        $d$ = (The Euclidean cell distance between  $n_{ij}$ and $n_{kl}$) + 1\;
\nl      Interdependency strength ($\gamma_{kl}^{ij}$) = $\frac{1}{d}$ \;
\nl     \If{$\gamma_{kl}^{ij} \geq \Gamma$}{
\nl      Add the interdependency edge of strength $\gamma_{kl}^{ij}$ from $n_{ij}$ to $n_{kl}$ in $G$}
         }\;
         }\;
\nl  \If{the node $n_{kl}$ may require some service from at least one node in $I_i$ \textbf{or} the state of the node $n_{kl}$ may be affected by the state of the node $n_{ij}$}{
\nl \If{$\nexists$ an edge from atleast one node of $I_i$ to $n_{kl}$}{
\nl Add the interdependency edge of highest strength = $\frac{1}{ d + 1}$ from $n_{ij}$ to $n_{kl}$ in $G$; where $n_{ij}$ is the nearest node of $I_i$ from $n_{kl}$, and $d = \text{The Euclidean cell distance between } n_{ij} \text{ and } n_{kl}$}}
         }
}

\nl \textbf{return} $G$
}
\caption{{\texttt{CreateNetwork} $(I_1, I_2, ..., I_s, \Gamma)$}}
\label{algo:cn}
\vspace{-0.2cm}
\end{algorithm}
\normalsize

\subsection{Infrastructure systems vulnerability assessment}
\label{comp-vulnerability}
After the network of infrastructure networks ($G$) is obtained using Algorithm~\ref{algo:cn}, a framework to estimate the vulnerability of each infrastructure due to both the intra- and inter-infrastructure cascading failures is created. 
In this process, three probabilities of failure are considered including, 
\begin{inparaenum}[(i)]
\item $P_{ij}^{intra}$--- the probability of failure of node $n_{ij}$ due to intra-infrastructure cascade effect at time $[0,t]$ where, $t \in [0,24]$; 
\item $P_{ij}^{inter}$--- the probability of failure of node $n_{ij}$ due to inter-infrastructure failure propagation from one infrastructure to another at time $[0,t]$ where, $t \in [0,24]$; and
\item $P_{ij}^{fail}$--- the comprehensive probability of failure arising from both intra-infrastructure and inter-infrastructure cascade at time $[0,t]$ where, $t \in [0,24]$. It is assumed that for a particular node $n_{ij}$, the intra-infrastructure failure probability is independent of the inter-infrastructure failure probability of the same node $n_{ij}$. This is a reasonable assumption because the source of the intra-infrastructure failure probability is any parent node of $n_{ij}$ within the same infrastructure $I_i$, whereas the inter-infrastructure failure probability is obtained from a node of a different network. Using the independence assumption, the probability of failure occurring from both the intra-infrastructure and inter-infrastructure cascades is calculated as, $P_{ij}^{intra}*P_{ij}^{inter}$.
\end{inparaenum}

Furthermore, when an infrastructure ($I_k$) receives service from another infrastructure ($I_i$), there can be generally three cases that may occur if multiple interdependency links exist. Say, the $\mathcal{N}_i$ be the set of nodes of $I_i$ from which there exists interdependency links to the node $n_{kl}$ of $I_k$ (i.e., $\exists n_{ij} \in \mathcal{N}_i$ such that $n_{ij} \rightarrow n_{kl}$ exists). In this scenario, there may be any of the following three cases that may arise:
\begin{enumerate}
    \item \textit{best case: }The node $n_{kl}$ fails if all the nodes of $\mathcal{N}_i$ fail. In this case, the node $n_{kl}$ fails after the node in $\mathcal{N}_i$, with the minimum probability of failure, fails. 
    \item \textit{worst case: }The node $n_{kl}$ fails if any one of the nodes in $\mathcal{N}_i$ fails. In this case, the node $n_{kl}$ fails after the node in $\mathcal{N}_i$, with the maximum probability of failure, fails.
    \item \textit{average case: }The functionality of node $n_{kl}$ is partially dependent on every node of $\mathcal{N}_i$. For example, if the nodes in $\mathcal{N}_i$ include certain types of supply facility locations; $n_{kl}$ are demand nodes and the demand of these nodes can be satisfied by multiple service nodes operating in conjunction. Without loss of generality, it can be assumed $n_{kl}$ is equally dependent on all the nodes of $\mathcal{N}_i$. 
\end{enumerate}
The infrastructure manager considers either of these cases according to their \textit{risk perception and prior experience}. For example, a risk-averse infrastructure manager who is interested to design a robust system shall consider the \textit{worst case scenario} of the supply demand characteristics from other infrastructures. On the other hand, a risk-seeking and opportunistic manager may consider the \textit{best case scenario}; whereas, the \textit{average case scenario} is well suited for a risk-neutral design of interdependent infrastructure systems. After considering a particular scenario of supply demand characteristic, a simulation environment is created where the probability of failure is calculated for each node of the infrastructures and how the failure probabilities evolve over the days is analyzed. As mentioned before, the failure probabilities depict the probability of a component failure at time $[0,t]$ where $t \in [0,24]$, depicting the hourly probabilities of failure for each component. The intra-infrastructure failure probability for the \textit{first day (i.e., iteration $1$)} is computed considering the initial failure probability follows exponential distribution, as indicated in previous studies~\cite{Boudali}. Then, the inter-infrastructure failure probability of node $n_{kl}$ due to $n_{ij}$ is calculated as the product of the strength of the interdependency links between them ($\gamma$) and the probability of failure of the parent node $n_{ij}$. 
\[ V_{kl}^{ij} = \gamma_{kl}^{ij} * P_{ij}^{fail} \]
According to the case of simulation (best, worst or average), the inter-infrastructure failure probability is calculated. As mentioned before, if the best case scenario is considered, the failure of the node happens if all the interconnected parent nodes fail. Hence, the probability of failure is the minimum of the failure probabilities of the parent nodes. On the other hand, a worst case situation may happen when a node fails if any one of the parent nodes fail. In this case, the failure probability of the node is same as the maximum probability of failure of the parent nodes. Finally, there can be an average case, where the failure of a node is dependent on all the parent nodes. In this case, an uniform distribution over all the parent nodes is assumed, and the failure probability is estimated as the average of the probabilities of failure of all the parent nodes. If there are two infrastructures $I_i$ and $I_k$, such that there exists interdependency edges from the nodes of $I_i$ to $I_k$, we have described how to calculate the probabilities of failure of the nodes of $I_k$ induced by the failure of the nodes of $I_i$ (refer to Section~\ref{inter-vulnerability}). However, as described before, the infrastructure $I_k$ may be dependent on multiple infrastructure systems, not a single infrastructure $I_i$. Now, the importance of one infrastructure on another infrastructure may be different. For example, for three interdependent infrastructure systems, $I_i, I_k,$ and $I_m$, where $I_m$ is dependent on both $I_i$ and $I_k$, the importance of dependency of $I_m$ on $I_i$ may be more than the dependency of $I_m$ on $I_k$ and vice-versa. To capture the combined effect of all the other infrastructure systems on a particular infrastructure, we introduce the relative importance matrix $R$ of dimension $s\times s$, where $s$ is the total number of infrastructures considered in the analysis. The importance of infrastructure $I_i$ on $I_k$ (or, the importance of dependency of $I_k$ on $I_i$) is depicted by the entry in the $i^{th}$ row and $k^{th}$ column i.e., $R_{ik}$. Furthermore, it is considered that the construction of $R$ follows the following properties.  
\begin{itemize}
    \item The diagonal entries of $R$ are $0$, i.e., $R_{ii} = 0 \; \forall i \in \{1,...,s\}$, indicating a particular infrastructure is not dependent on itself for inter-infrastructure failure;
    \item The column sum of every column in $R$ is $1$, i.e., $\sum_{i}R_{ik} = 1 \; \forall k \in \{1, ..., s\}$, indicating that the importance of interdependencies of the infrastructure $I_k$ over the other infrastructures must sum up to $1$. Note that, without loss of generality if the column sum is not $1$, it can be normalized to $1$. 
\end{itemize}
Now, the inter-infrastructure probability of failure of the node $n_{kl}$ of infrastructure $I_k$ due to all the parent infrastructures in time $[0,t]$ for either the best, worst or average case scenario ($P_{kl}^{inter}$) can be estimated as, 
\[
P_{kl}^{inter} = \sum_{i=1}^sR_{ik}P_{kl}^i 
\]
where $P_{kl}^i$ is the inter-infrastructure failure probability of the node $n_{kl}$ induced by $I_i$ in either the best, worst or average case scenario.  

Finally, the probability of failure of a node may result from either intra-infrastructure or inter-infrastructure cascade or both. 

\[
P_{ij}^{fail} = P_{ij}^{intra} + P_{ij}^{inter} - (P_{ij}^{intra}*P_{ij}^{inter})
\]

After the probabilities of failure $P_{ij}^{fail}$ for the \textit{first day (i.e., interation 1} is realized, they are considered as the initial failure probabilities of the components for the \textit{next day (i.e., iteration 2)} and again the intra-infrastructure, inter-infrastructure and combined/comprehensive probabilities of failures are calculated. This process is iterated over multiple iterations depicting the effects of multiple days of service outage of interdependent infrastructure systems. 
Finally, the sensitivity of the user-defined parameters considered in our model including, the threshold of the interdependency edge strength $\Gamma$, and the relative importance of one infrastructure on another $R_{ik}$ is considered. It is noteworthy that, as $\Gamma$ is changed, the inter-infrastructure probabilities of failure change corresponding to the best, worst or the average case scenarios under consideration. The following proposition~\ref{prop1} says that, with increasing $\Gamma$ (decrease in the neighborhood size), the inter-infrastructure failure probabilities will increase for the best case scenario, and the inter-infrastructure failure probabilities will decrease for the worst case scenario. In the best case scenario where the child node fails if all the parent nodes fail, the parent nodes act as redundancies. Hence, in that case, more the number of the parent nodes, less vulnerable is the child node. In the worst case scenario where the child node fails if any one of the parent nodes fail, the parent nodes act as non-redundancies or deficiencies. In such a scenario, more the number of parent nodes are, there is a higher chance of finding a critical node with high probability of failure which can cause the failure of the child node. Hence, in the worst case, more the number of the parent nodes are, more the vulnerable is the child node.  

\begin{proposition}
\label{prop1}
Consider two infrastructures $I_i$ and $I_k$, where there exists interdependency edges from $I_i$ to $I_k$. Say, $\mathcal{N}_{\Gamma z}$ be the neighborhood of a node $n_{kl} \in I_k$ such that the interdependency edge strength from any node $n_{ij} \in I_i$, is $\geq \Gamma z$ for a given $\Gamma z$. Furthermore, let us consider, $P_{\Gamma z}^b$, $P_{\Gamma z}^w$, and $P_{\Gamma z}^a$ denote the average inter-infrastructure failure probabilities of the nodes of $I_k$ for the given $\Gamma z$ in time $[0,t]$, for the best,worst and average case scenarios respectively. 
Then, 
\[ P_{\Gamma z1}^b \leq P_{\Gamma z2}^b, \; \text{where, } \Gamma z1 \leq \Gamma z2
\]

\[ P_{\Gamma z1}^w \geq P_{\Gamma z2}^w, \; \text{where, } \Gamma z1 \leq \Gamma z2
\]
\end{proposition}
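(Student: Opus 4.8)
The plan is to reduce the whole statement to the elementary fact that the minimum of a finite set of numbers can only decrease, and the maximum can only increase, when the set is enlarged, and then to track how the threshold $\Gamma z$ reshapes the parent neighborhoods $\mathcal{N}_{\Gamma z}$. The essential observation I would establish first is that the per-edge quantities $V_{kl}^{ij} = \gamma_{kl}^{ij}\,P_{ij}^{fail}$ are fixed numbers, determined only by the geography (through $\gamma_{kl}^{ij} = 1/d$) and the parent failure probabilities $P_{ij}^{fail}$, \emph{neither of which depends on the threshold}. Only set membership depends on $\Gamma z$: by construction the neighborhood of a child node $n_{kl}$ is the sub-level set $\mathcal{N}_{\Gamma z} = \{\, n_{ij} \in I_i : \gamma_{kl}^{ij} \geq \Gamma z \,\}$ of a fixed weight function, so $\Gamma z1 \leq \Gamma z2$ immediately forces the nesting $\mathcal{N}_{\Gamma z2} \subseteq \mathcal{N}_{\Gamma z1}$. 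This monotone nesting of the parent index sets, with the edge weights held fixed, is the structural core of the argument.

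Next I would invoke the case definitions of Section~\ref{comp-vulnerability}: for a single node, the best-case inter-infrastructure probability is $\min_{n_{ij} \in \mathcal{N}_{\Gamma z}} V_{kl}^{ij}$ and the worst-case probability is $\max_{n_{ij} \in \mathcal{N}_{\Gamma z}} V_{kl}^{ij}$. Combining these with the nesting from the first step yields, for every node $n_{kl} \in I_k$,
\[
\min_{\mathcal{N}_{\Gamma z1}} V_{kl}^{ij} \;\leq\; \min_{\mathcal{N}_{\Gamma z2}} V_{kl}^{ij}, \qquad \max_{\mathcal{N}_{\Gamma z1}} V_{kl}^{ij} \;\geq\; \max_{\mathcal{N}_{\Gamma z2}} V_{kl}^{ij}.
\]
Since $P_{\Gamma z}^b$ and $P_{\Gamma z}^w$ are the averages of these per-node quantities over the fixed node set of $I_k$, and averaging preserves the direction of a pointwise inequality, I obtain exactly $P_{\Gamma z1}^b \leq P_{\Gamma z2}^b$ and $P_{\Gamma z1}^w \geq P_{\Gamma z2}^w$. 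The best-case and worst-case claims thus follow from the same nesting fact read through $\min$ and $\max$ respectively, which also explains why the proposition omits the average case: the arithmetic mean is not monotone under set inclusion.

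The one place that needs genuine care, and the step I expect to be the main obstacle, is the degenerate branch of Algorithm~\ref{algo:cn} (lines~10--12): when $\Gamma z$ exceeds every available strength $\gamma_{kl}^{ij}$, no edge clears the threshold and the algorithm instead forcibly attaches the single nearest (hence highest-strength) parent, so $\mathcal{N}_{\Gamma z}$ is no longer literally a sub-level set. I would dispatch this by noting that the nearest parent $n_{ij}^{\text{nearest}}$ is a purely geometric object, independent of the threshold, carrying the maximal strength; it therefore lies in \emph{every} non-degenerate neighborhood, and the fallback neighborhood is the singleton $\{n_{ij}^{\text{nearest}}\}$. Consequently $\mathcal{N}_{\Gamma z2} \subseteq \mathcal{N}_{\Gamma z1}$ still holds across the regime boundary (the singleton is contained in any larger genuine neighborhood), so the $\min$/$\max$ inequalities and their averages survive unchanged. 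The only hypothesis I would flag explicitly is that the parent probabilities $P_{ij}^{fail}$ are frozen across the two threshold choices within a single iteration, guaranteeing that $V_{kl}^{ij}$ are the same numbers in both neighborhoods and that the comparison is genuinely between $\min$/$\max$ of a common multiset over nested index sets.
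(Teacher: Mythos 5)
Your proof is correct and follows essentially the same route as the paper's: you establish the nesting $\mathcal{N}_{\Gamma z2} \subseteq \mathcal{N}_{\Gamma z1}$ (handling the nearest-node fallback of Algorithm~\ref{algo:cn} exactly as the paper's four-case analysis does, via the observation that the nearest parent carries maximal strength and hence belongs to every non-degenerate neighborhood), and then read both claims off the monotonicity of $\min$ and $\max$ under set inclusion. Your explicit refinements --- that $V_{kl}^{ij}$ and the parent probabilities $P_{ij}^{fail}$ must be held fixed across the two threshold choices within an iteration, and that averaging over the nodes of $I_k$ preserves the pointwise inequalities --- are details the paper leaves implicit but do not change the argument.
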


\begin{proof}
To prove the proposition, the solutions for $\Gamma z2$ neighborhood is always contained in the solutions for $\Gamma z1$ neighborhood is shown. There can be four cases that may arise for $\Gamma z1 \leq \Gamma z2$ as follows. 
\begin{inparaenum}[(1)]
    \item If, $\mathcal{N}_{\Gamma z2} \neq \phi$, $\mathcal{N}_{\Gamma z1} \neq \phi$, in this case, according to the construction in Algorithm~\ref{algo:cn} if there exists some nodes $n_{ij} \in I_i$ in the $\Gamma z2$ neighborhood ($\mathcal{N}_{\Gamma z2}$) of $n_{kl}$, then that node must also exist in the $\Gamma z1$ neighborhood ($\mathcal{N}_{\Gamma z1}$) of $n_{kl}$. 
    \item If, $\mathcal{N}_{\Gamma z1} = \phi$ then, according to the construction, $\mathcal{N}_{\Gamma z2} = \phi$. In this case, let $\mathcal{N}_{\Gamma z1}^A$ and $\mathcal{N}_{\Gamma z2}^A$ be the auxiliary neighborhoods that are created by adding the then $n_{ih} \in I_i$ which is the nearest node of $n_{kl}$ to the neighborhoods $\mathcal{N}_{\Gamma z1}$ and $\mathcal{N}_{\Gamma z2}$. Hence both the $\mathcal{N}_{\Gamma z1}^A$ and $\mathcal{N}_{\Gamma z2}^A$ consists of $n_{ih}$.
    \item If, $\mathcal{N}_{\Gamma z2} = \phi$ and If, $\mathcal{N}_{\Gamma z1} \neq \phi$. In this case, say $n_{ih} \in I_i$ is the nearest node added to the $\mathcal{N}_{\Gamma z2} = \phi$ to ensure at least one interdependency edge exists, forming the auxiliary neighborhood $\mathcal{N}_{\Gamma z2}^A$. Now, as $\mathcal{N}_{\Gamma z1} \neq \phi$, then, we conclude $n_{ih}$ must belong to $\mathcal{N}_{\Gamma z1}$, i.e.,  $n_{ih} \in \mathcal{N}_{\Gamma z1}$. 
    \item If, $\mathcal{N}_{\Gamma z2} = \phi$ and if, $\mathcal{N}_{\Gamma z1} = \phi$, then similar to a previous case, let $n_{ih} \in I_i$ which is the nearest node of $n_{kl}$ is added to the neighborhoods $\mathcal{N}_{\Gamma z1}$ and $\mathcal{N}_{\Gamma z2}$ to form the auxiliary $\mathcal{N}_{\Gamma z1}^A$ and $\mathcal{N}_{\Gamma z2}^A$. Hence, both the $\mathcal{N}_{\Gamma z1}^A$ and $\mathcal{N}_{\Gamma z2}^A$ consists of $n_{ih}$. 
\end{inparaenum}

From the above four cases, we conclude that, either $\mathcal{N}_{\Gamma z2} \subseteq \mathcal{N}_{\Gamma z1}$ or, $\mathcal{N}_{\Gamma z2}^A \subseteq \mathcal{N}_{\Gamma z1}$ or, $\mathcal{N}_{\Gamma z2}^A \subseteq \mathcal{N}_{\Gamma z1}^A$. Now, according to the definition, in the best case the node $n_{kl}$ fails if all the nodes in the neighborhood fail. That is from Algorithm~\ref{algo:cv}, $P_{\Gamma z}^b = \argmin_{n_{ij} \in \mathcal{N}_{\Gamma z}}V_{ij}^{kl}$. As we have a minimization problem here, we conclude $P_{\Gamma z1}^b \leq P_{\Gamma z2}^b, \; \text{where, } \Gamma z1 \leq \Gamma z2$. Similarly, according to the definition, in the worst case the node $n_{kl}$ fails if at least one of the nodes in the neighborhood fails. That is from Algorithm~\ref{algo:cv}, $P_{\Gamma z}^b = \argmaxA_{n_{ij} \in \mathcal{N}_{\Gamma z}}V_{ij}^{kl}$. As we have a maximization problem in this scenario, we conclude $P_{\Gamma z1}^w \geq P_{\Gamma z2}^w, \; \text{where, } \Gamma z1 \leq \Gamma z2$. Hence, proved. 

\end{proof}

\footnotesize
\begin{algorithm}[H]
\DontPrintSemicolon
\footnotesize
\textbf{Input}: $G$, maximum number of iterations ($M \geq 1$), mean time of failure for the root nodes to start with ($\lambda$), and the relative importance matrix $R$  \\
\textbf{Output}:  The probability of failure $P_{ij}^{fail}$ for each node of each infrastructure in $G$\\
\Begin{

 \nl $iterations = 0$ \;
\nl \While{iterations $\leq M$}{
 
\nl \If{iterations = 0}{
\nl The initial probabilities of failure $\kappa$ $\sim$ Exp($\lambda$)\;}
\nl \Else{
\nl The initial probabilities of failure $\kappa$ $\leftarrow$ $P^{fail}$ of last iteration\;}
\nl \For{each infrastructure network $I_i  \in G$}{
 \nl  \For{each node $n_{ij} \in I_i$}{
 \nl   $P_{ij}^{intra} \leftarrow$ \texttt{IntraModel}($I_i$, $\kappa$)\;
 \nl   $P_{ij}^{fail} \leftarrow P_{ij}^{intra}$
}
}

\nl \For{every two network $I_i$ and $I_k$}{
\nl \If{$n_{ij} \rightarrow n_{kl}$ exists}{
\nl $V_{kl}^{ij} = \gamma_{kl}^{ij} * P_{ij}^{fail}$ \;
\nl \For{each node $n_{kl}$}{
/* Simulate either the best, average or worst case scenario */\;
\nl \If{Best case}{
\nl $P_{kl}^{i} = \argmin_{(j)}V_{kl}^{ij} $\;}
\nl \ElseIf{Worst case}{
\nl $P_{kl}^{i} = \argmaxA_{(j)}V_{kl}^{ij}$ \;}
\nl \ElseIf{Average case}{
\nl    /* Considering uniform distribution*/\;
\nl $P_{kl}^{i} =\frac{1}{|\mathcal{N}_k|} \sum_{n_{i,j}\in \mathcal{N}_k}V_{kl}^{ij}$\;}
}
}
}
\nl \For{every infrastructure $I_k$}{
\nl \For{every infrastructure $I_i$ from where interdependency edge to $I_k$ exists}{
\nl \For{every node $n_kl \in I_k$}{
\nl $P_{kl}^{inter} = \sum_{i = 1}^s R_{ik}P_{kl}^{i}$}}}
\nl \For{every infrastructure $I_k$}{
\nl \For{every node $n_{kl} \in I_k$}{
\nl $P_{kl}^{fail} =  P_{ij}^{intra} + P_{ij}^{inter} - (P_{ij}^{intra}*P_{ij}^{inter})$}}

\nl $iterations$ $=$ $iterations+1$

} 
\nl \For{each node $n_{ij}$ in each infrastructure network $I_i \in G$ }{
\nl \textbf{return} $P_{ij}^{final}$ \;}}

\caption{{\texttt{CalculateVulnerability} ($G, M, \lambda$, $R$) }}
\label{algo:cv}
\vspace{-0.2cm}
\end{algorithm}
\normalsize

\section{DATA COLLECTION AND PRE-PROCESSING}
\label{data}
To implement our proposed framework, a case study with three infrastructure systems: electricity, water, and a small scale supply chain network is considered. Various details on the data collection and pre-processing approach is presented in this section.

\subsection{Electricity infrastructure: } A typical electricity infrastructure system consists of generators, buses, lines and loads~\cite{wood}. The buses are connected to each other via lines and both the generators and the loads are connected to the buses. Each line is originating from a source bus and end up at a destination bus. The hypothetical node-network data representing a typical power grid is leveraged for our analysis. This data is obtained from the Reliability Test System Grid Modernization Lab Consortium (RTS GMLC) ~\cite{rtsgmlc}. The data is geo-coded and a failure probability is associated with each component (generators, buses, and lines). Using QGIS, a grid with $0.25$ degrees horizontal and vertical spacing is constructed. This grid over the maps of California, Arizona and Nevada as shown in Fig.~\ref{geodat1} is placed. The initial hourly failure rates of the components (generators, buses, and the lines) are obtained from the RTS-GMLC data set. A summary of the initial failure rates of the components is provided in Table~\ref{table:elec1}. 

\begin{figure}[h]
    \centering
    \includegraphics[ width=0.8\linewidth]{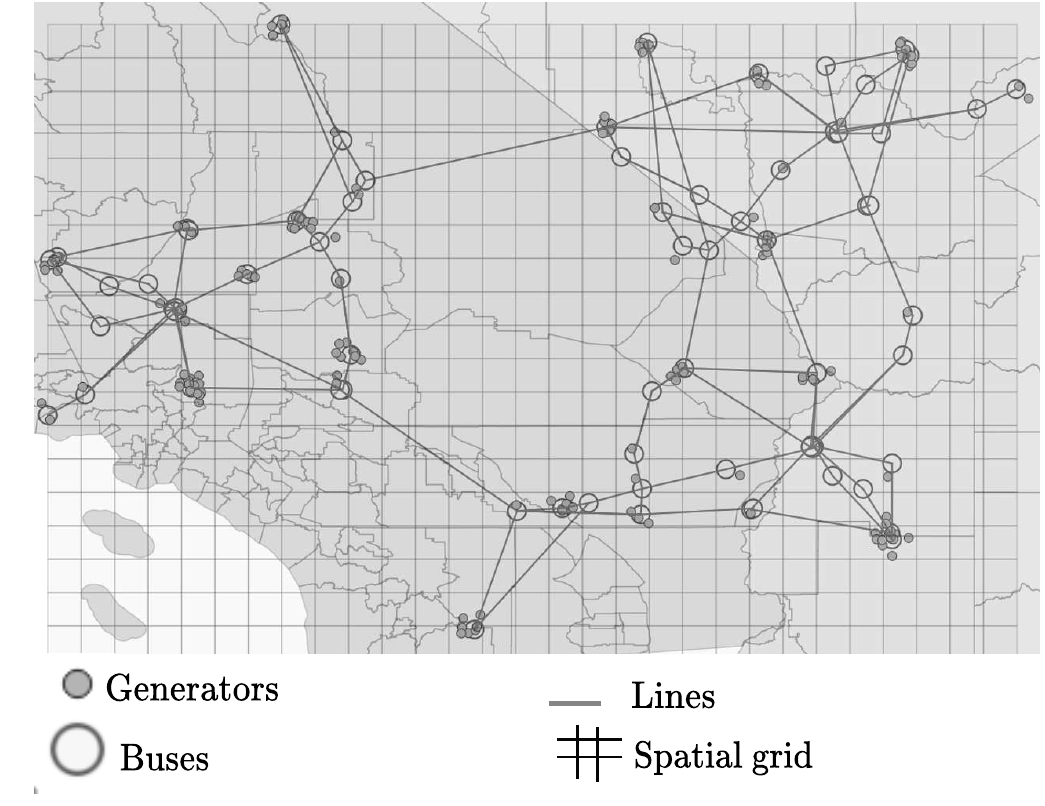}
    \caption{The layout of the electricity infrastructure system is depicted over the maps of the states of Arizona, California, and Nevada.}
    \label{geodat1}
\end{figure}

\begin{table}[h]
\centering
\begin{adjustbox}{max width=\linewidth,center}
 \begin{tabular}{||c | c | c | c | c ||} 
 \hline
Component & \makecell[c]{Mean \\failure \\ rate ($\lambda_{base}$)} & \makecell[c]{Maximum\\ failure \\ rate}  & \makecell[c]{Minimum\\ failure \\ rate} & \makecell[c]{Standard\\ deviation\\ ($\sigma_{base}$)} \\
\hline \hline
Generators & 0.003 & 0.012 & 0.001 & 0.001 \\
\hline
Lines & 0.007 & 0.02 & 0.001 & 0.003 \\ 
\hline
Buses & 0.002 & 0.01 & 0.001 & 0.001 \\
\hline
\end{tabular}
\end{adjustbox}
\caption{Summary of the initial failure rates of the components of the electricity infrastructure.}
\label{table:elec1}
\end{table}  

\subsection{Water distribution network:}
 In literature, for water distribution infrastructure, hypothetical networks like Anytown, Colorado Springs Utilities, EXNET or Richmond are extensively used with different types of topologies like grid iron, ring system, radial system, or dead end system~\cite{Mazumder, Yazdani}. However, none of these hypothetical networks are geo-coded. In a general water distribution network (WDN), water is brought to the treatment plants from the sources using pumping or gravity system. Then from the treatment plants, water is stored in storage reservoirs. Finally, the water is brought and temporarily stored in distribution reservoirs to meet the fluctuating demands~\cite{reservoirs1, Sincero}. It is considered that the water flows through the pipelines from one node to another via pumping system. Hence, at every stage of the distribution network, electricity supply is required for operation of a component of the WDN. In this study, a hypothetical WDN with a total of $30$ nodes, out of which three are sources, five are treatment plants, six storage reservoirs and sixteen distribution reservoirs with a topology similar to radial system is considered. The simulated WDN is placed on the same geographical boundary of the electricity infrastructure for ease of analysis. In Fig.~\ref{geodat2}, the spatial distribution of the WDN is shown.  In Table~\ref{table:water1}, the mean and the standard deviation of the hourly rates of failure for different components of the WDN are depicted. Note that, these hypothetical values are arbitrarily selected by the researchers of this study, and may not depict the actual failure rates of the components in a real life scenario. However, not to mention that these parameters can be updated easily in presence of actual data or expert opinion without loss of generality of the overall framework. 
\begin{figure}[H]
    \centering
    \includegraphics[ width=0.8\linewidth]{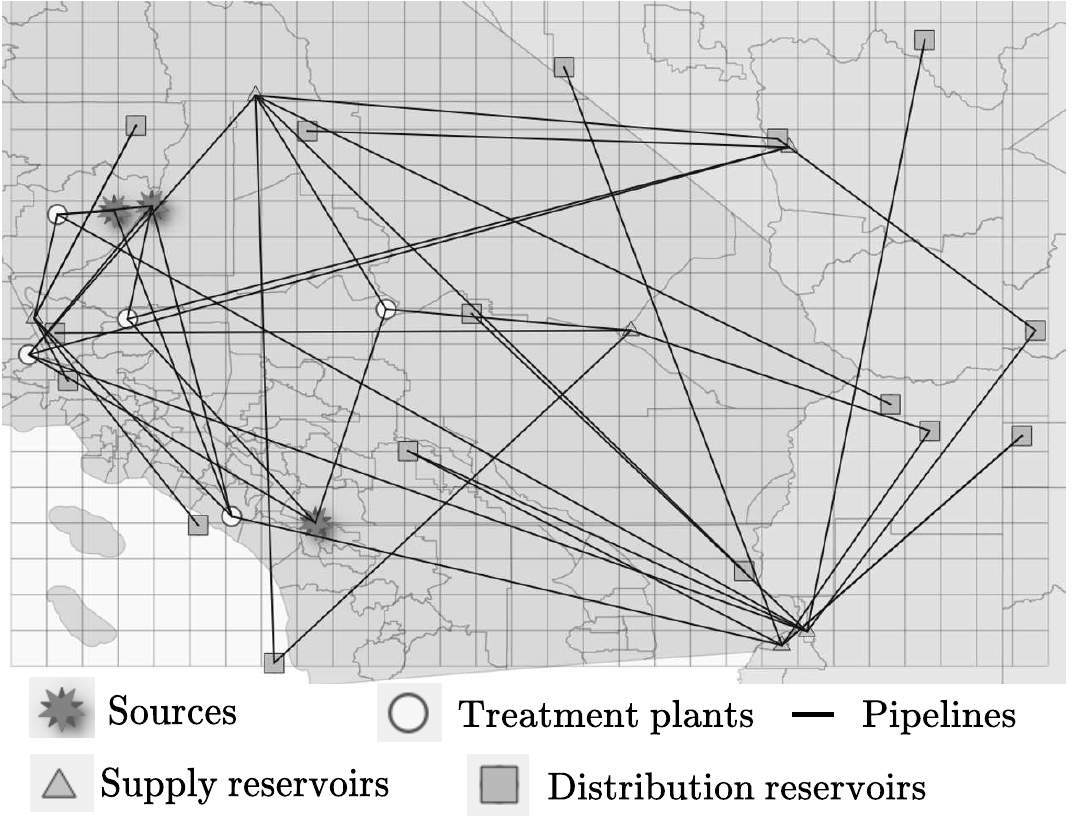}
    \caption{The layout of the water distribution system.}
    \label{geodat2}
\end{figure}

\begin{table}[H]
\centering
\begin{adjustbox}{max width=\textwidth,center}
 \begin{tabular}{||c | c | c | c | c ||} 
 \hline
Component & \makecell[c]{Mean failure \\ rate $(\lambda_{base})$} &  \makecell[c]{Standard \\deviation \\($\sigma_{base}$) } \\
\hline \hline
Sources & 0.005 & 0.001 \\
\hline
Treatment plants & 0.008 & 0.001 \\ 
\hline
Storage reservoirs & 0.009 & 0.002  \\
\hline
Distribution reservoirs & 0.01 & 0.002  \\
\hline
Pipelines & 0.01 & 0.001 \\
\hline
\end{tabular}
\end{adjustbox}
\caption{Summary of the initial failure rates of the components of the water distribution infrastructure.}
\label{table:water1}
\end{table}  

\subsection{Supply chain network: }
A typical supply chain network consists of suppliers, manufacturers and retailers~\cite{Mari}. Often other sets of nodes like plants or distributors are also considered as components of the supply chain network (SCN)~\cite{Gong}. However, without loss of generality, it is considered that a SCN consists of the suppliers, manufacturers and retailers. The material flow is between the suppliers and manufacturers, and between manufacturers and retailers as represented by the arcs of the network~\cite{TANG2016}. In this study, a hypothetical single commodity SCN with a total of fifteen nodes out of which, there are three supplier nodes, five manufacturer nodes and seven retailer nodes is considered.  The SCN over the same geo-grid of the electricity and water distribution network as depicted in Fig.~\ref{geodat3} is placed. In Table~\ref{table:supp1}, the mean and standard deviation of the initial hourly failure rates of the components of the hypothetical supply chain network considered in this study are summarized. Again, in presence of an actual network and data, the rates can be updated according to the user inputs. 

\begin{figure}[h]
    \centering
    \includegraphics[ width=0.8\linewidth]{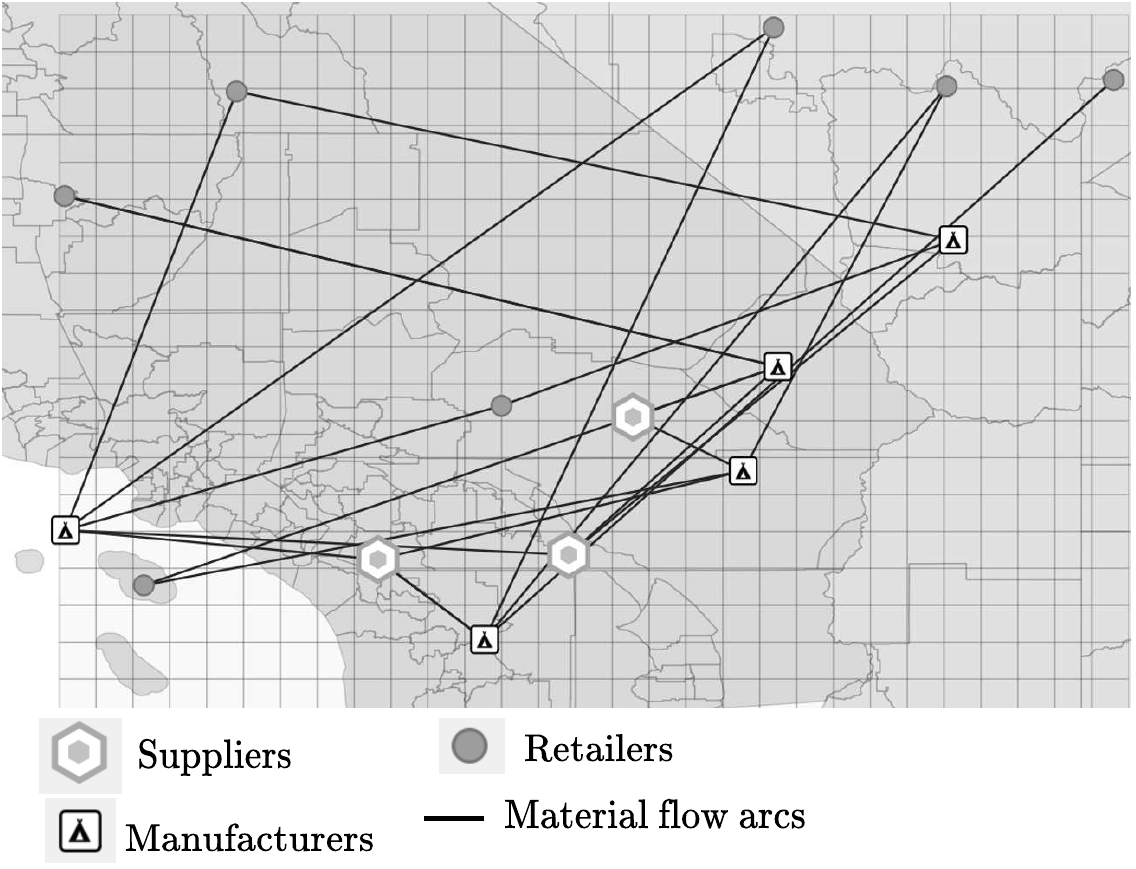}
    \caption{The layout of the supply chain network.}
    \label{geodat3}
\end{figure}

\begin{table}[h]
\centering
\begin{adjustbox}{max width=\linewidth,center}
 \begin{tabular}{||c | c | c | c | c ||} 
 \hline
Component & \makecell[c]{Mean failure \\ rate ($\lambda_{base}$)} &  \makecell[c]{Standard \\ deviation \\ ($\sigma_{base}$)} \\
\hline \hline
Suppliers & 0.005 & 0.001 \\
\hline
Manufacturers & 0.008 & 0.001 \\ 
\hline
Retailers & 0.009 & 0.002  \\
\hline
\makecell[c]{Material flow \\ arcs} & 0.01 & 0.001 \\
\hline
\end{tabular}
\end{adjustbox}
\caption{Summary of the initial failure rates of the components of the supply chain network.}
\label{table:supp1}
\end{table}  

\section{RESULTS}
\label{result}
Following the data pre-processing, the intra-infrastructure failure probabilities for all the three infrastructures using the DBN of the infrastructure as described in Algorithm~\ref{algo:intra} is estimated. Thereafter, the inter-infrastructure vulnerability and finally the comprehensive vulnerability resulting from both the intra-infrastructure and inter-infrastructure failures are estimated. As mentioned before, in this study the vulnerability of an infrastructure represents the dynamic probability of failure as a function of time, which is the failure probability of each of the infrastructure components in time $[0,t]$, where, $t \in [0,24]$. 

\subsection{Intra-infrastructure failure probability estimation}
In this section, our key findings on the failure probabilities of each infrastructure arising from the intra-infrastructure connections and the cascade propagation within a network is presented. For the \textit{electricity infrastructure ($I_1$)}, the mean failure rates of the different buses, generators and transmission lines are obtained from sampling with replacement from a normal distribution with mean $\lambda_{base}$ and standard deviation $\sigma_{base}$ as depicted in Table~\ref{table:elec1}. The Chi-squared goodness of fit test~\cite{Chi} is performed to identify that the mean rates of failure depicted in the RTS-GMLC data can be approximated using a normal distribution. For example, for the generators, the rate is assumed to be normally distributed with mean $0.003$ and standard deviation $0.001$, and the actual failure rates of each generator is obtained from sampling of this normal distribution. Similarly for the water distribution network ($I_2$) and the supply chain network ($I_3$), the initial failure rates of each individual component is obtained from sampling of a normal distribution with mean and standard deviation as depicted in the Tables~\ref{table:water1} and~\ref{table:supp1} respectively. The failure time of each infrastructure component follows exponential distribution with the realized rate of the particular component. That is, the initial probability of failure due to intra-infrastructure cascade in time $[0,t]$ is $P^{intra}(t) = 1- e^{-\lambda t}$, where $\lambda \sim \mathbf{N}(\lambda_{base}, \sigma_{base}^2)$. The distribution of $\lambda$ can be identified using some goodness-of-fit test. Furthermore, the sensitivity of the parameter $\lambda$ can be identified using some sensitivity analysis framework by alternating the mean of the distribution of $\lambda$ (for e.g., in case of a disaster, it is reasonable to assume that the mean failure rate increases, i.e., say, $\lambda \sim \mathbf{N}(\lambda_{base}+\sigma_{base}, \sigma_{base}^2)$. Such a framework is used in different areas like mental health prediction and crime analysis~\cite{mukherjee2021multi, ganguly2021multifaceted}. However, in this paper, the failure rate $\lambda$ is considered to follow a normal distribution with mean $\lambda_{base}$ and standard deviation $\sigma_{base}$ identified using Chi-squared goodness of fit test. Using Algorithm \ref{algo:intra} a DFT is constructed for each of the infrastructure and obtain the failure probabilities for each component of the networks due to the failure propagation within the network. The failure probabilities of each infrastructure components in time $[0,24]$ or within $24$ hours due to the cascading failure within the network has been depicted in Fig.~\ref{res_intra}. Furthermore, a diagnostic test is performed where it is found that the intra-infrastructure vulnerability of a node is particularly sensitive to the amount of redundancy associated with it, and the vulnerability decreases with increasing redundancy. The redundancy depicts the number of parent nodes associated with a particular node. This supports the findings from previous studies and thus validates our results~\cite{RE_book1, RE_book2}.
Intuitively, as there exists more parent nodes for a particular node, the failure probability of the input decreases, resulting in a decrease in the vulnerability of the child node as expected (see Fig~\ref{redund}).

\begin{figure*}[ht]
\centering
    \includegraphics[ width=\textwidth]{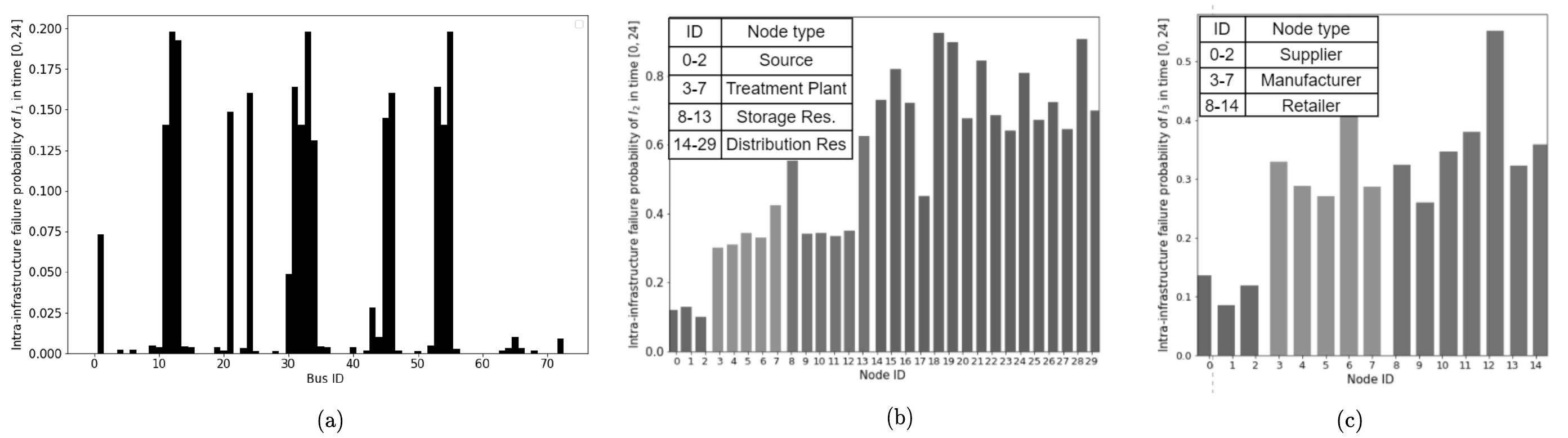}
    \caption{The component (node) wise probabilities of failure in time $[0,24]$ due to intra-infrastructure cascade for (a) the electricity infrastructure considering the buses as the components, (b) the water distribution network where the Node IDs: $0$ to $2$ represent the source nodes, $3$ to $7$ represent the treatment plants, $8$ to $13$ are the storage reservoirs, and $14$ to $29$ represent the distribution reservoirs and (c) the supply chain network, where the Node IDs: $0$ to $2$ represent the supplier nodes, $3$ to $7$ represent the manufacturer nodes, and $8$ to $14$ are the retailer nodes.}
    \label{res_intra}
\end{figure*}

\begin{figure*}[h]
    \centering
    \begin{subfigure}[t]{0.31\textwidth}
        \centering
        \includegraphics[width=\textwidth]{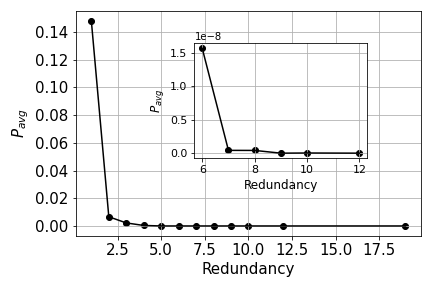}
        \captionsetup{justification=centering}
        \caption{}
        \label{red_el}
    \end{subfigure}
    \hfill
    \begin{subfigure}[t]{0.31\textwidth}
        \centering
        \includegraphics[width=\textwidth]{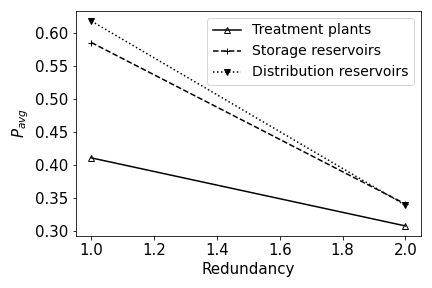}
        \captionsetup{justification=centering}
        \caption{}
        \label{red_wat}
    \end{subfigure}
    \hfill
    \begin{subfigure}[t]{0.31\textwidth}
        \centering
        \includegraphics[width=\textwidth]{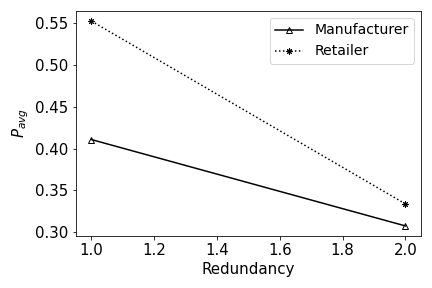}
        \captionsetup{justification=centering}
        \caption{}
        \label{red_scn}
    \end{subfigure}
    \caption{Redundancy versus intra-infrastructure failure probabilities of the components for (a) electricity infrastructure, (b) water distribution network and (c) supply chain network.}
\label{redund}
\end{figure*} 

\subsection{Inter-infrastructure vulnerability modeling}
 For an electricity infrastructure to operate successfully, it is assumed that water is required for power generating operations and cooling the system. Hence, every bus of the electricity infrastructure require certain type of service from the distribution reservoirs of the water distribution infrastructure. Similarly, for a water distribution network to operate, pumps require electricity~\cite{Rinaldi}. It is considered, that the pumps are associated with all the nodes of the water distribution infrastructure, i.e., the sources, treatment plants, storage reservoirs and distribution reservoirs. Hence all the nodes of the water distribution infrastructure require services from the electricity infrastructure. As a first step of estimating the inter-infrastructure vulnerability, the network of networks as depicted in Algorithm~\ref{algo:cn} is constructed. As the value of $\Gamma$ is changed, the number of interdependent edges from one infrastructure to the other changes. It is  identified that for high value of $\Gamma$, the neighborhood size of a cell is small and hence, the number of interdependent edges are also small; on the other hand, as $\Gamma$ decreases, the neighborhood size increases and there can be many potential nodes of a parent infrastructure which may provide service to a node of the child infrastructure. In Figs.~\ref{gam_a} and~\ref{gam_b} respectively, how the number of interdependent edges from the electricity infrastructure ($I_1$) to the water distribution infrastructure ($I_2$) and the supply chain network ($I_3$), and, from the water distribution network ($I_2$) to the electricity infrastructure ($I_1$) and the supply chain network ($I_3$) vary according to $\Gamma$ have been depicted.   

\begin{figure}[h]
    \centering
    \begin{subfigure}[t]{0.49\linewidth}
        \centering
        \includegraphics[width=\linewidth]{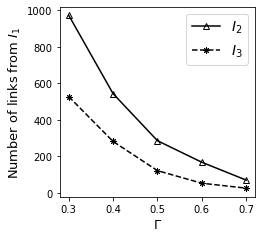}
        \captionsetup{justification=centering}
        \caption{}
        \label{gam_a}
    \end{subfigure}
    \hfill
    \begin{subfigure}[t]{0.49\linewidth}
        \centering
        \includegraphics[width=\linewidth]{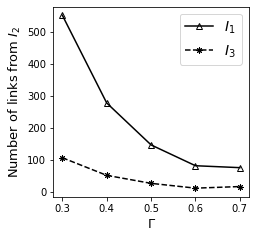}
        \captionsetup{justification=centering}
        \caption{}
        \label{gam_b}
    \end{subfigure}
    \caption{Variation in the number of interdependent edges versus $\Gamma$, for the parent infrastructure: (a) electricity distribution infrastructure, and (b) water distribution network.}
\label{in}
\end{figure} 

After constructing the network of networks for $\Gamma = 0.5$ (starting point), the inter-infrastructure probabilities of failure induced by one infrastructure to other as described in Algorithm~\ref{algo:cn} and Algorithm~\ref{algo:cv} is estimated.
Table~\ref{inter2} depicts the inter-infrastructure failure probabilities averaged over the nodes of the infrastructure induced by one infrastructure on the other corresponding to the best, worst and average case scenarios. The column $I_i \rightarrow I_k$ denotes the vulnerability (average probability of failure of the nodes in time $[0,24]$) of the infrastructure $I_k$ induced by the vulnerability of the interdependent infrastructure network ($I_i$), i.e., $\frac{\sum_{l=1}^{N_k} P_{kl}^{i}}{N_k}$ where, $N_k$ is the number of nodes in infrastructure $I_k$, and $P_{kl}^{i}$ is the inter-infrastructure probability of $l^{th}$ node $n_{kl} \in I_k$ induced by $I_i$. In this study, $I_1$ is the electricity infrastructure, $I_2$ is the water distribution network and $I_3$ is the supply chain network. For example, the column $I_1 \rightarrow I_2$ denotes the vulnerability (average probability of failure of the nodes in time $[0,24]$) of the water distribution infrastructure ($I_2$) induced by the vulnerability of electricity infrastructure network ($I_1$).  Various important phenomena regarding the inter-infrastructure failure probabilities for the best, worst and average case scenarios for different values of $\Gamma$ are observed. According to the Proposition~\ref{prop1}, as $\Gamma$ increases, the average inter-infrastructure failure probabilities of an infrastructure in time $[0,t]$ for the best case scenario increases. On the other hand, as $\Gamma$ increases, the inter-infrastructure failure probabilities for the worst case scenario in time $[0,t]$ decreases. Furthermore, it can be noted that, for a particular value of $\Gamma$, the best case scenario inter-infrastructure failure probability is the lowest, while the worst case inter-infrastructure probabilities of failure is the highest. Though the inter-infrastructure failure probabilities are summarized here, for $I_3$ there exists two different columns $I_1 \rightarrow I_3$ and $I_2 \rightarrow I_3$. Using the relative importance matrix, the inter-infrastructure failure probability is calculated for $I_3$ as depicted in Algorithm~\ref{algo:cv}. In this study, three relative importance matrices are considered. As $I_1$ is only dependent on $I_2$, and $I_2$ is only dependent on $I_1$, the entries corresponding to $I_1$ and $I_2$ in all the three matrices are $1$. According to the construction, all the other entries for the columns corresponding to $I_1$ and $I_2$ are $0$. In $R_1$, it is considered that $I_3$ is equally dependent on $I_1$ and $I_2$. Hence, the importance of $I_1$ and $I_2$ on $I_3$ are $0.5$ each. However, in $R_2$, it is assumed that $I_1$ is less important for $I_3$ compared to $I_2$; and in $R_3$, it is assumed that $I_1$ is more important for $I_3$ compared to $I_2$. While constructing $R_2$ and $R_3$, it should be noted that the column sum of the $I_3$ has to be $1$.  

\begin{figure*}[ht]
\centering
    \includegraphics[ width=\textwidth]{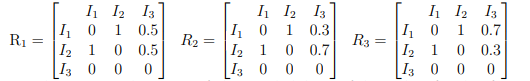}
\end{figure*}

 As depicted in Fig.~\ref{AllR11}, the average failure probabilities of the nodes of each infrastructure versus $\Gamma$, considering different relative importance matrices $R_1$, $R_2$, and $R_3$ are plotted. As $I_1$ is only dependent on $I_2$, and $I_2$ is only dependent on $I_1$, the entries in the relative importance matrix are the same for $R_1$, $R_2$, and $R_3$ for the columns corresponding to $I_1$ and $I_2$. Hence in Fig.~\ref{AllR11}(a), Fig.~\ref{AllR11}(b), and Fig.~\ref{AllR11}(c), there exists one curve each for $I_1$ and $I_2$. However, as $I_3$ is dependent on both $I_1$ and $I_2$, the entries in the $R_1$, $R_2$, and $R_3$ are different and in each of Fig.~\ref{AllR11}(a), Fig.~\ref{AllR11}(b), and Fig.~\ref{AllR11}(c) we have three curves for $I_3$ each corresponding to each relative importance matrix. There are two important observations from these plots which are as follows: 1) the failure probabilities increase as $\Gamma$ increase for the best case scenario as in Fig.~\ref{AllR11}(a) and the failure probabilities decrease as $\Gamma$ increase for the worst case scenario in Fig.~\ref{AllR11}(c) This supports the proposition~\ref{prop1}; and, 2) for $I_3$ as we have different curves for the different relative importance matrices, it is observed that the probability of failure is highest for $R_2$. This is because, in $R_2$, the importance of $I_2$ is more and the inherent failure probability of $I_2$ is more compared to $I_1$. Hence, the induced vulnerability to $I_3$ is higher for $R_2$ compared to $R_1$ and $R_3$.  

\begin{table}
\centering
\caption{Average inter-infrastructure probabilities of failure of one infrastructure induced by another versus $\Gamma$ for different scenarios after $1$ iteration.}
\label{inter2}
\begin{adjustbox}{max width=\linewidth,center}
\begin{tabular}{|l|l|l|l|l|l|}
\hline
\textbf{$\Gamma$} & \textbf{Scenario} & \textbf{$I_1 \rightarrow I_2$} & \textbf{$I_2 \rightarrow I_1$} & \textbf{$I_1 \rightarrow I_3$} & \textbf{$I_2 \rightarrow I_3$} \\ \hline
\multirow{3}*{\makecell{0.3}}
            & Best case &  $1.01*10^{-6}$ &  0.12 & $2.3*10^{-6}$ & 0.13\\
                        \cline{2-6}
           & Worst case& 0.11 & 0.40 & 0.099 & 0.35\\
                        \cline{2-6}
           & Average case& 0.04 & 0.38 & 0.043 & 0.33 \\
\hline \hline 
\multirow{3}*{\makecell{0.5}}
            & Best case &  $9.2*10^{-4}$ &  0.26 & $8.8*10^{-4}$ & 0.25\\
                        \cline{2-6}
           & Worst case& 0.08 & 0.39 & 0.073 & 0.33\\
                        \cline{2-6}
           & Average case& 0.046 & 0.37 & 0.044 & 0.32 \\
\hline \hline
\multirow{3}*{\makecell{0.7}}
            & Best case &  $9*10^{-3}$ &  0.35 & $9*10^{-4}$ & 0.31\\
                        \cline{2-6}
           & Worst case& 0.05 & 0.36 & 0.035 & 0.32\\
                        \cline{2-6}
           & Average case& 0.037 & 0.36 & 0.032 & 0.32 \\
\hline 
\end{tabular}
\end{adjustbox}
\end{table}

\begin{figure*}[ht]
\centering
    \includegraphics[ width=\textwidth]{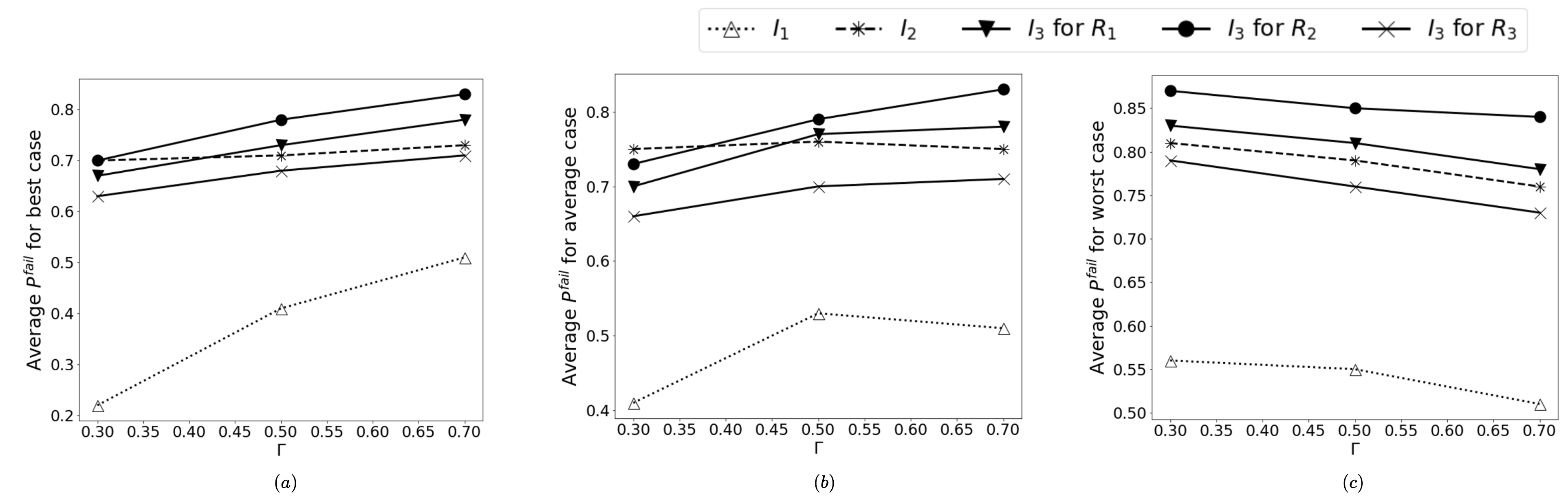}
    \caption{The probabilities of failure for each infrastructure versus $\Gamma$ for different relative importance matrices under consideration for the scenarios: (a) best case, (b) average case, and (c) worst case.}
    \label{AllR11}
\end{figure*}

\subsection{Comprehensive vulnerability estimation}
Before proceeding with the final step of estimating the comprehensive vulnerability of the interdependent infrastructure systems, first we hold some of the parameters used in our study constant, depicting our base case, to understand the failure propagation dynamics in the interdependent infrastructure system over time. Considering the mean initial failure probabilities to be same as the base case with ($\lambda_{base}$), $\Gamma = 0.5$, and the relative importance matrix as $R_1$, the failure probabilities of each node in time $[0,24]$ considering both the intra-infrastructure and inter-infrastructure vulnerabilities are estimated. First, the failure probabilities for every node of the infrastructures under the three different scenarios viz. the worst case, average case and the best case corresponding to the different number of iterations are estimated. Then corresponding to each scenario and the number of iterations, the average probabilities of failure over the nodes of a particular infrastructure are obtained. That is, the average probability of failure of infrastructure $I_i$ is, $\frac{\sum_{j=1}^{N_i} P_{ij}^{fail}}{N_i}$ where, $N_i$ is the number of nodes in infrastructure $I_i$. In Fig.~\ref{sim_b}, the probabilities of failure in time $[0,24]$ for the infrastructures are depicted for the best case scenario as a function of the number of days (i.e., iterations). It is observed that as the number of iterations (days) increase, the probabilities of failure increase. However, the rate of increase decreases with increase in the number of iterations. In Figs.~\ref{sim_w} and~\ref{sim_a} respectively, the failure probabilities of the infrastructures for the worst case and average case scenarios are depicted. Though there exist similar patterns for the different cases, the probabilities of failure for the worst case scenario is the highest, followed by the probabilities of failure for the best case. In fact, the probabilities of failure for the best case is observed to be the lowest, as expected.

\begin{figure*}[h]
    \centering
    \begin{subfigure}[t]{0.31\textwidth}
        \centering
        \includegraphics[width=\textwidth]{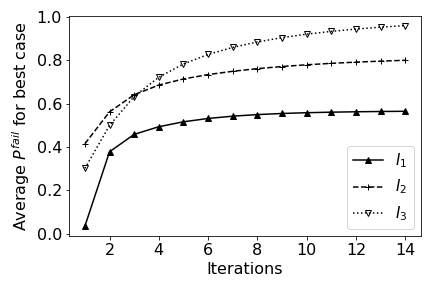}
        \captionsetup{justification=centering}
        \caption{}
        \label{sim_b}
    \end{subfigure}
    \hfill
    \begin{subfigure}[t]{0.31\textwidth}
        \centering
        \includegraphics[width=\textwidth]{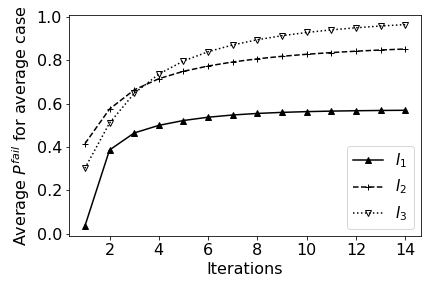}
        \captionsetup{justification=centering}
        \caption{}
        \label{sim_a}
    \end{subfigure}
    \hfill
        \begin{subfigure}[t]{0.31\textwidth}
        \centering
        \includegraphics[width=\textwidth]{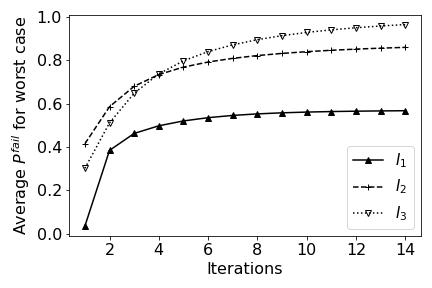}
        \captionsetup{justification=centering}
        \caption{}
        \label{sim_w}
    \end{subfigure}
    \caption{Number of iterations versus the average probability of failure of the nodes for each infrastructure in time $t \in [0,24]$, for (a) best case scenario, (b) average case scenario and (c) worst case scenario where $\Gamma = 0.5$ and the relative importance matrix is $R1$.}
\label{in}
\end{figure*} 

The key findings from the results obtained using our framework are summarized as follows. 

\begin{itemize}
    \item The \textbf{intra-infrastructure} vulnerability is observed to be inversely proportional to the number of redundancies inbuilt in the infrastructure system. This indicates that allocating resources to add redundancies in an existing infrastructure system is essential to reduce its risk of failure.
    \item For the same value of $\Gamma$ and the relative importance matrix, it is observed that the failure probabilities due to \textbf{inter-infrastructure} cascade under the worst case scenario is more than that of the average case scenario, which in turn is higher than the best case scenario. 
    The worst case scenario is realized when there are more critical nodes (less redundant nodes), failure of which induces failure of nodes of the child infrastructure and well suited for risk averse conservative design. On the other hand, the best case is realized when there are redundant parent infrastructure nodes and depict the opportunistic or risk seeking design of the system. 
    \item The parameter $\Gamma$ is inversely proportional to the geographical proximity or neighborhood of the components of two infrastructure systems under consideration. As the threshold $\Gamma$ increases, the neighborhood size decreases. Hence, for a particular infrastructure and particular relative importance matrix, the best case scenario failure probability increases as $\Gamma$ increases (geographical neighborhood decreases) due to decrease in redundancy. On the other hand, for the worst case scenario, the failure probability decreases as $\Gamma$ increases (geographical neighborhood decreases), due to decrease in the number of critical components of parent infrastructure which may lead to failure of the child component. 
    \item The vulnerability of an infrastructure is dependent on the importance of each parent infrastructure systems. If a particular infrastructure is critically dependent on another highly vulnerable infrastructure, then the vulnerability of the dependent infrastructure also increases. This indicates that working independently in an isolated way, i.e., not communicating with the managers of the related interdependent infrastructure systems would underestimate the risk of failure. Thus, working collaboratively and sharing the necessary information, aiming towards minimizing the comprehensive vulnerability of the interdependent infrastructure systems, is of utmost importance for enhancing the overall resilience of the infrastructure systems.
\end{itemize}

To better communicate the comprehensive vulnerability of the infrastructure components, a spatial distribution of the infrastructure systems' vulnerability under the best and the worst case scenarios is presented. Leveraging the Voronoi partitioning approach the vulnerability map is presented. In the Voronoi partition of the space, for each point or seed (node of the infrastructure) there is an associated region known as a Voronoi cell consisting of all the points of the plane that are closest to that particular seed than to any other seeds~\cite{voronoi}. For each node of an infrastructure, the corresponding Voronoi cell can be thought of the area that is served by that particular node, as all the points within the Voronoi cell are closest to that particular node compared to the other nodes of the infrastructure. The Voronoi partition of the region is constructed to visualize the differential vulnerabilities of the different regions measured by the failure probabilities of each component within a Voronoi cell in time $[0,24]$, due to the infrastructure component failure for the best case and the worst case scenarios. Fig.~\ref{vor_eleca} depicts the failure probabilities of the electric buses under the best case scenario, whereas Fig.~\ref{vor_elecb} presents the comprehensive vulnerability map of the electricity infrastructure for the worst case scenario. As the water is distributed only from the distribution reservoirs, the Voronoi partition consisting of only the distribution reservoirs for the best case and worst case scenarios are respectively depicted in Figs.~\ref{vor_watb}  and~\ref{vor_watw}. Each Voronoi cell in Figs.~\ref{vor_watb} and~\ref{vor_watw} respectively depict the probability of the region to be without water supply due to the disruption owing to the intra-infrastructure and inter-infrastructure cascading failures. Similarly, Figs.~\ref{vor_scn_b} and~\ref{vor_scn_w} depict the comprehensive vulnerability of the region due to failure of the supply chain network. It is observed that on average, the worst case scenario failure probabilities are higher compared to that of the best case scenario as expected. Such a vulnerability map will be particularly helpful for the infrastructure managers for identifying which geographical area of service providing sector is more vulnerable and require corrective measures and protection plans to reduce such vulnerability. 

\begin{figure*}[h]
    \centering
    \begin{subfigure}[t]{0.31\textwidth}
        \centering
        \includegraphics[width=\textwidth]{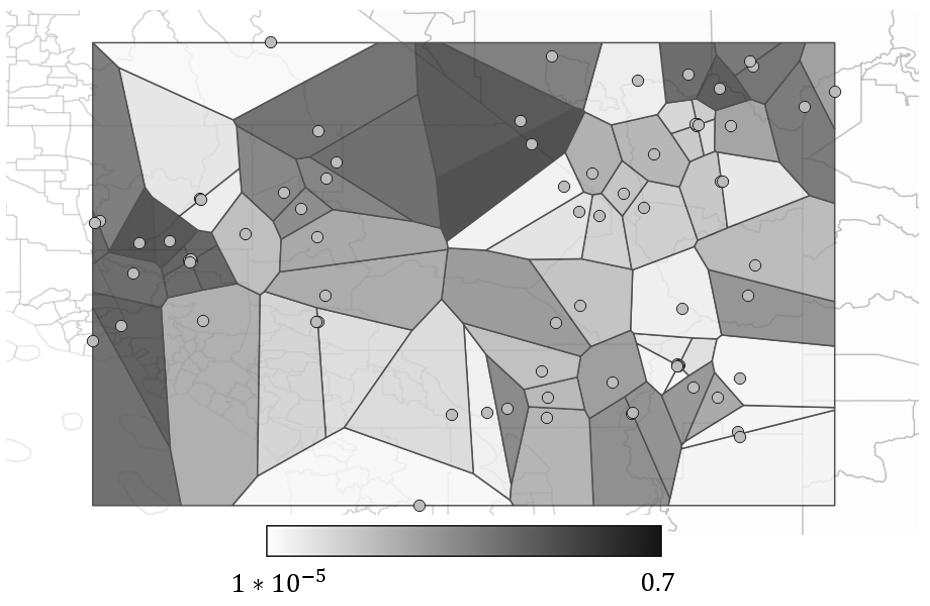}
        \captionsetup{justification=centering}
        \caption{}
        \label{vor_eleca}
    \end{subfigure}
    \hfill
    \begin{subfigure}[t]{0.31\textwidth}
        \centering
        \includegraphics[width=\textwidth]{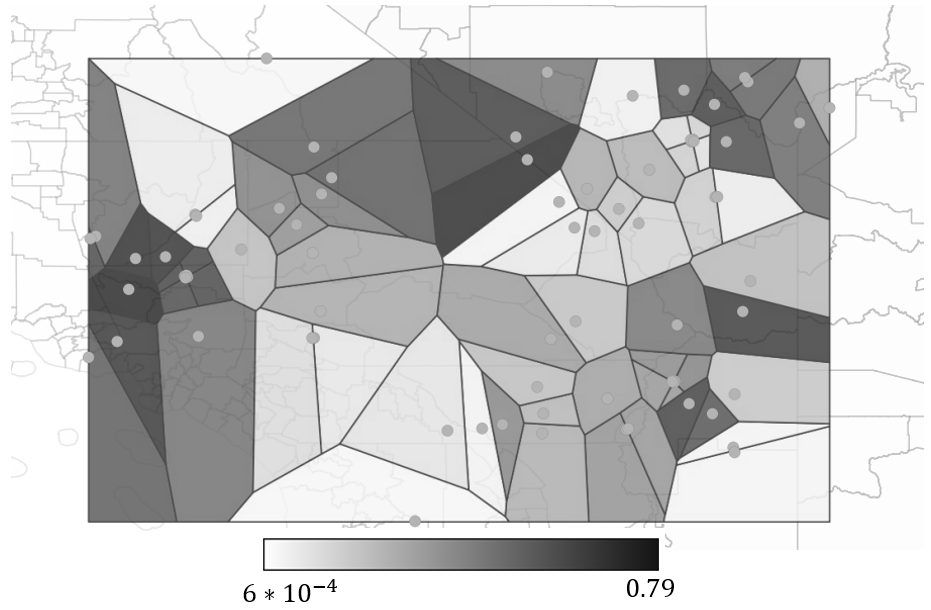}
        \captionsetup{justification=centering}
        \caption{}
        \label{vor_elecb}
    \end{subfigure}
    \hfill
    \begin{subfigure}[t]{0.31\textwidth}
        \centering
        \includegraphics[width=\textwidth]{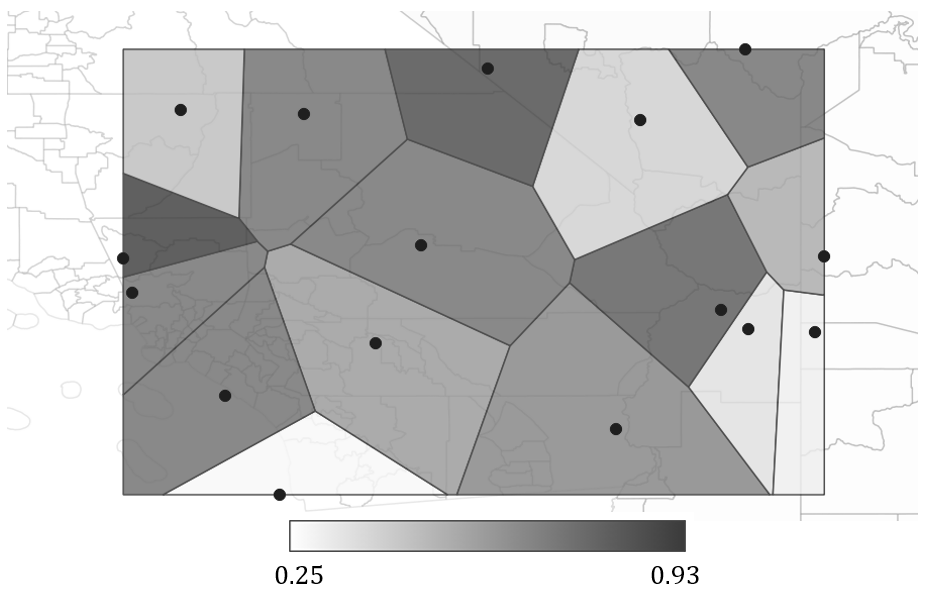}
        \captionsetup{justification=centering}
        \caption{}
        \label{vor_watb}
    \end{subfigure}
    \hfill
    \begin{subfigure}[t]{0.31\textwidth}
        \centering
        \includegraphics[width=\textwidth]{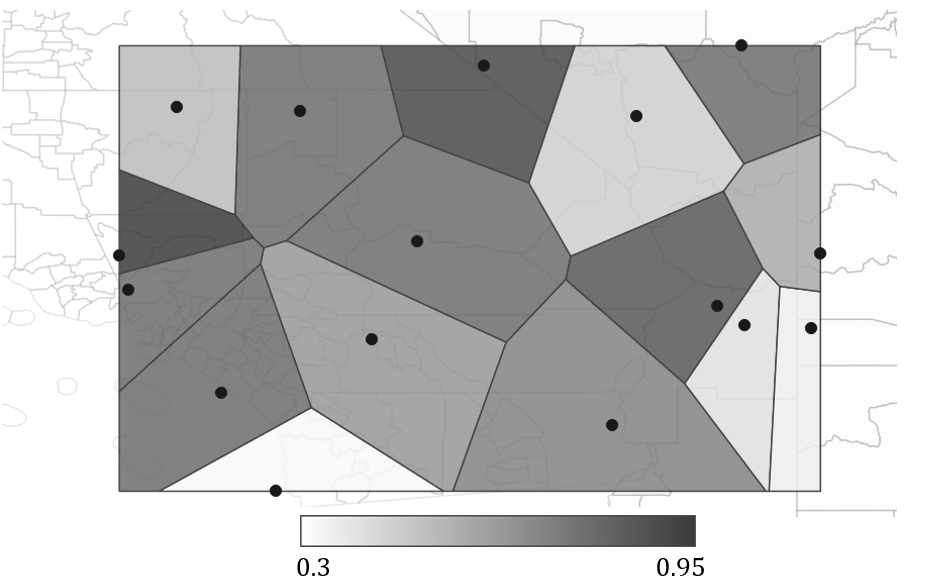}
        \captionsetup{justification=centering}
        \caption{}
        \label{vor_watw}
    \end{subfigure}
    \hfill
    \begin{subfigure}[t]{0.31\textwidth}
        \centering
        \includegraphics[width=\textwidth]{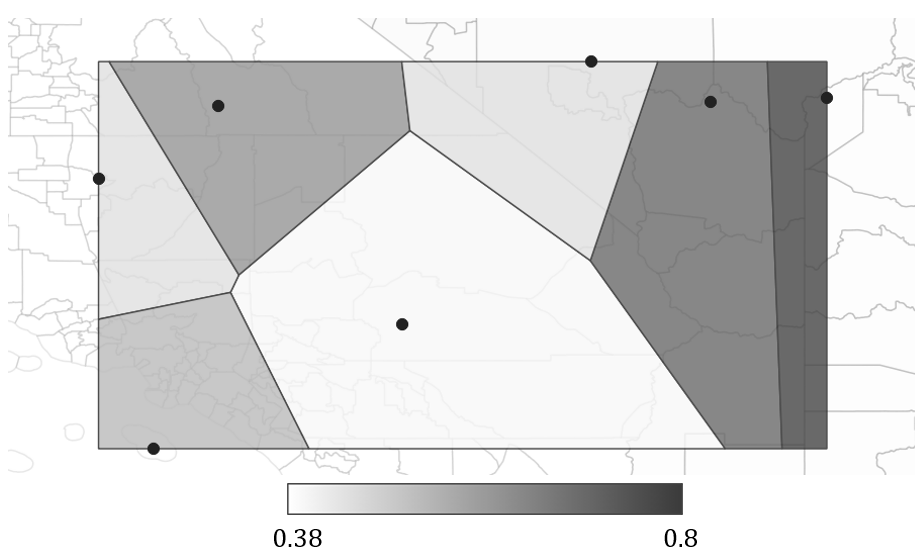}
        \captionsetup{justification=centering}
        \caption{}
        \label{vor_scn_b}
    \end{subfigure}
    \hfill
    \begin{subfigure}[t]{0.31\textwidth}
        \centering
        \includegraphics[width=\textwidth]{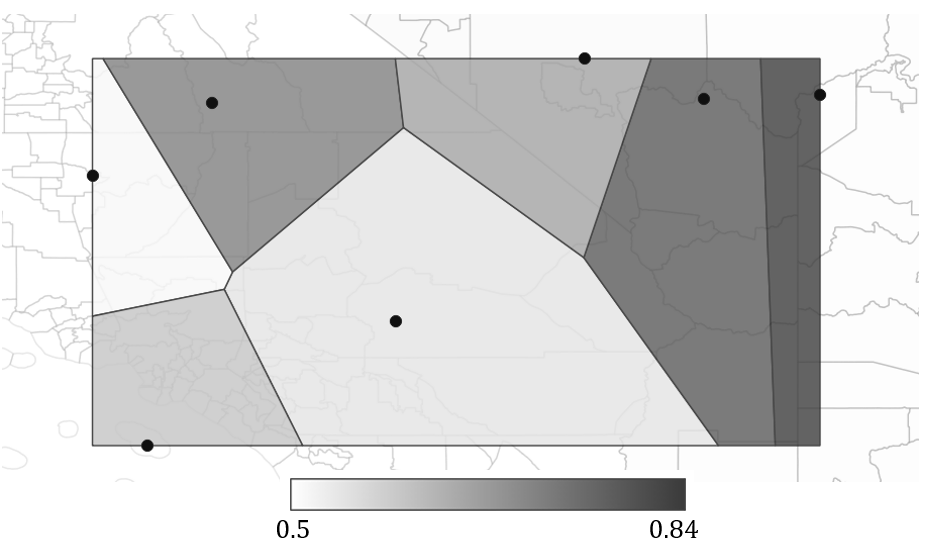}
        \captionsetup{justification=centering}
        \caption{}
        \label{vor_scn_w}
    \end{subfigure}
    \caption{The spatial distribution of failure probabilities of the infrastructure nodes in time $[0,24]$ after two iterations considering $R1$ as the relative importance matrix and $\Gamma = 0.5$ for the (a) best case scenario for electricity, (b) worst case scenario for electricity infrastructure, (c) best case scenario for water distribution network, (d) worst case scenario for water distribution network, (e) best case scenario for the supply chain network, and (f) worst case scenario for the supply chain network. The failure probabilities are deployed over a Voronoi partition of the region.}
\label{in}
\end{figure*}

\section{CONCLUSION}
\label{conc}
In this research, we present a novel framework using a hybrid of deterministic and simulation-based approaches to estimate the vulnerability of multiple interdependent infrastructure system arising from the cascading failure propagation over time, within an infrastructure network and failures induced due to the interdependencies with other infrastructure systems. Leveraging a generalized algorithm, a dynamic fault tree and continuous time Bayesian network are developed to model the failure propagation within an infrastructure. Thereafter, using a simulation-based approach a network of interdependent infrastructure networks is constructed to capture the vulnerability of one infrastructure induced by the vulnerabilities of other interdependent infrastructure. The compound vulnerability is estimated for three different scenarios (best, worst and average case in terms of likelihood of the system failure), and is presented as a single metric on a dynamic geo-map for better communication to the stakeholders. The framework is implemented for a case study using synthetic data on the interdependent infrastructure systems like electricity distribution, water distribution and a supply chain network. Several user-defined parameters including the failure rates of the infrastructure components and the relative importance of the infrastructure interdependencies are used in the framework that makes the framework flexible and generalized to be applicable to any networked infrastructure systems. Although, the parameters depicted in the case study do not represent the exact values from a real life scenario, in presence of real data these parameters may be altered without any alteration of the overall framework. The scenario based supply-demand characteristic viz., worst case, average case or best case are chosen by the infrastructure managers according to their risk perception. However, in real life, a combination of these three scenarios may be realized, where some components are critically dependent on another (worst case) while other components have some amount of redundancy (best/average case). Our framework can be easily extended to tackle this situation. Besides, this work can be extended to any number of interdependent infrastructure system under incomplete information about the supply demand characteristics of the infrastructures.






\newpage
\printbibliography

\newpage
\section*{Appendix}
\label{app}
\subsection*{Derivation of the closed form solutions for the probabilities of failure of DFT gates}
\label{app1}
Using the unit step function defined by,
\[
u(t-\tau) = \begin{cases}
0, \quad \text{if $t < \tau$}\\
\frac{1}{2}, \quad \text{if $t = \tau$}\\
1, \quad \text{if $t>\tau$}
\end{cases}
\]
is used in the CTBN to denote the event occuring at time $t$ and not before that. Furthermore, the impulse function or the Dirac delta function defined by,
\[
\delta(t-\tau) = 0, \quad \text{if $t \neq \tau$} \\ \quad
\text{and} \quad
\int_{-\infty}^\infty \delta(t-\tau) dt = 1
\]
is used for modeling a variable taking a specific unique value. Now, we describe the probability density function (PDF) of the $AND$, $OR$, and the $WSP$ gate of the DFT. 

\textbf{AND gate: } If the nodes $A$ and $B$ with the marginal PDF $f_A(a)$ and $f_B(b)$ respectively are connected by $AND$ gate to produce the output $X$, then the conditional PDF of $f_{X|A,B}(x|a,b)$ is given by,
\[
f_{X|A,B}(x|a,b) = u(b-a)\delta(x-b) + u(a-b)\delta(x-a)
\]
where the first term denotes when $A$ fails before $B$, the state of $X$ is same as the state of $B$ which failed later; and the second term denotes when $B$ fails before $A$, then the state of $X$ is same as the state of $A$. The marginal probability density of $X$ is obtained by marginalizing the joint distribution $f_{ABX}(a,b,x)$, as
\[
f_X(x) = \int_{-\infty}^\infty \int_{-\infty}^\infty f_{ABX}(a,b,x) db da = \int_{-\infty}^\infty \int_{-\infty}^\infty f_{X|AB}(x|a,b) f_B(b) f_A(a) db da = [F_B(b)F_A(a)]'
\]
Let us consider that the probability density function of failure time of $A$ follows exponential distribution with rate $\lambda_A$ and for $B$, the probability density function of failure time is exponentially distributed with rate $\lambda_B$. Hence, the probability of failure of $X$ in time $[0,t]$ is given by,
\[
F_X(t) = \int_0^t f_X(x)dx = F_B(t)F_A(t) = 1- e^{-\lambda_At} - e^{-\lambda_Bt} + e^{-(\lambda_A+\lambda_B)t}
\]

\textbf{OR gate: } If the nodes $A$ and $B$ with the marginal PDF $f_A(a)$ and $f_B(b)$ respectively are connected by $OR$ gate to produce the output $X$, then the conditional PDF of $f_{X|A,B}(x|a,b)$ is given by,

\[
f_{X|A,B}(x|a,b) = u(b-a)\delta(x-a) + u(a-b)\delta(x-b)
\]
where, the first term denotes if $A$ fails before $B$, then the state of $X$ is same as the state of $A$ which fails first; on the other hand, if $B$ fails before $A$, then the state of $X$ is same as the state of $B$. Using the similar procedure of the $AND$ gate, we have, 

\[
f_X(x) = \int_{0}^{\infty} \int_{0}^{\infty} f_{ABX}dbda = f_A(x) + f_B(x) - [F_B(b)F_A(a)]'.
\]

Finally, considering for $A$ and $B$, the time of failure follows exponential distribution with rate $\lambda_A$ and $\lambda_B$ respectively, the probability of failure of $X$ in time $[0,t]$ is given by,

\[
F_X(t) = \int_{0}^{t}  f_{X}(x)dx = F_A(t) + F_B(t) - F_B(t)F_A(t) = 1- e^{-(\lambda_A+\lambda_B) t}
\]

\textbf{WSP gate: } 
In a two input WSP gate, say, $A$ is the primary unit and $B$ is the spare unit. When the system starts, the component $A$ stats working and the component $B$ is in standby or dormant mode. In dormant mode, the failure rate is reduced by a factor $\alpha$. First to model the failure of the node $B$, we have,

\[
f_{B|A}(b|a) = u(a-b)\alpha f_{B_i}(b)[1-F_{B_i}(b)]^{\alpha-1}+ u(b-a)f_{B_i}(b-a)[1-F_{B_i}(a)]^\alpha
\]

where, $f_{B_i}$ and $F_{B_i}$ are the in isolation 
Considering the exponential time of failure for $A$ and $B_i$ in isolation with rates $\lambda_A$ and $\lambda_B$ respectively, the probability density function of the node $B$ is given as,

\begin{align*}
   & f_B(b) = \int_0^\infty f_{B|A}(b|a) f_A(a) da, \\
   & = \alpha f_{B_i}(b)[1-F_{B_i}(b)]^{\alpha -1}[1-F_A(b)] + \int_0^b f_{B_i}(b-a)[1-F_{B_i}(a)]^\alpha f_A(a) da \\
  &  = \alpha \lambda_B e^{-b(\lambda_A +\lambda_B \alpha)} + \frac{\lambda_A \lambda_B}{\lambda_B- \lambda_B \alpha - \lambda_A} [e ^{-b \lambda_A - b \alpha \lambda_B} - e^{-\lambda_B b}]
\end{align*}

Hence, the probability of failure of $B$ in $[0,t]$ is given by, 
\begin{align*}
   & F_B(t) = \int_0^t f_B(b)db, \\
  & = \int_0^t \alpha \lambda_B e^{-b(\lambda_A +\lambda_B \alpha)} db + \int_0^t \frac{\lambda_A \lambda_B}{\lambda_B- \lambda_B \alpha - \lambda_A} [e ^{-b \lambda_A - b \alpha \lambda_B} - e^{-\lambda_B b}] db \\
  & = \frac{\alpha \lambda_B (e^{-t(\lambda_A + \alpha \lambda_B)}-1)}{-\lambda_A - \alpha \lambda_B} - \frac{\lambda_A(-(e^{-\lambda_B t}-1)(\lambda_A + \alpha \lambda_B) + \lambda_B e^{-t(\lambda_A + \alpha \lambda_B)}- \lambda_B)}{(\lambda_A + \alpha \lambda_B) (\lambda_B - \lambda_A -\alpha \lambda_B)}
\end{align*}

The output $X$ of the WSP is an $AND$ gate, i.e., $X = A \,AND\, B$. 

Hence, the probability of failure of $X$ in time $[0,t]$ is given by,
\begin{equation*}
    F_X(t) = (1-  e^{-\lambda_A t})(\frac{\alpha \lambda_B (e^{-t(\lambda_A + \alpha \lambda_B)}-1)}{-\lambda_A - \alpha \lambda_B} - \frac{\lambda_A(-(e^{-\lambda_B t}-1)(\lambda_A + \alpha \lambda_B) + \lambda_B e^{-t(\lambda_A + \alpha \lambda_B)}- \lambda_B)}{(\lambda_A + \alpha \lambda_B) (\lambda_B - \lambda_A -\alpha \lambda_B)})
\end{equation*}

\textbf{Priority AND gate: } In a priority AND gate with two inputs $A$ and $B$ producing output $X$, the output fails if both the inputs fail and $A$ fails before $B$. 
Using the method proposed by Amari \textit{et al.}, the probability of failure of $X$ in $[0,t]$ is given by,
\[
F_X(t) = \int_0^t f_A(a)da \int_a^tf_B(b)db = \int_0^t f_A(a)da [\int_0^t f_B(b)db - \int_0^a f_B(b)db]
\]

Using numerical integration methods,

\[
F(t) = \sum_{i=1}^T [F_A(i\times h) - F_A((i-1)\times h)] [F_B(t) - F_B(i \times h)]
\]
where, $T$ is the number of time steps or intervals and $h= t/T$ is the step size. 

\end{document}